\title{\LARGE \bf
Learning Control Barrier Functions with High Relative Degree \\ for Safety-Critical Control 
}
\author{Chuanzheng Wang, Yinan Li, Yiming Meng, Stephen L. Smith, Jun Liu% <-this % stops a space
%\thanks{*This work was not supported by any organization}% <-this % stops a space
\thanks{Chuanzheng Wang, Yinan Li, Yiming Meng and Jun Liu are with the Department of Applied Mathematics,        
University of Waterloo, Waterloo, Ontario, Canada, 
        {\tt\small \{cz.wang, yinan.li, yiming.meng, j.liu\}@uwaterloo.ca}}%
\thanks{Stephen L. Smith is with the Department of Electrical and Computer Engineering, 
University of Waterloo, Waterloo, Ontario, Canada, 
        {\tt\small stephen.smith@uwaterloo.ca}}%
}
 \newtheorem{problem}{Problem}
 \renewcommand{\epsilon}{\varepsilon}
 \renewcommand{\theta}{\vartheta}
 \renewcommand{\kappa}{\varkappa}
 \renewcommand{\rho}{\varrho} % remember my teacher and friend Adalberto!
 \renewcommand{\phi}{\varphi}
 \newcommand{\Real}{\mathbb{R}}
 \newcommand{\BlackBox}{\rule{1.5ex}{1.5ex}}
 \newtheorem{definition}{Definition}
 \newtheorem{remark}{Remark}
 \newtheorem{prop}{Proposition}
 \newtheorem{lma}{Lemma}
 \newtheorem{cor}{Corollary}
\DeclareMathOperator*{\argmin}{arg\,min}
\begin{document}

\maketitle
\thispagestyle{empty}
\pagestyle{empty}

%%%%%%%%%%%%%%%%%%%%%%%%%%%%%%%%%%%%%%%%%%%%%%%%%%%%%%%%%%%%%%%%%%%%%%%%%%%%%%%%
\begin{abstract}
Control barrier functions have shown great success in addressing control problems with safety guarantees. These methods usually find the next safe control input by solving an online quadratic programming problem. However, model uncertainty is a big challenge in synthesizing controllers. This may lead to the generation of unsafe control actions, resulting in severe consequences. In this paper, we develop a learning framework to deal with system uncertainty. Our method mainly focuses on learning the dynamics of the control barrier function, especially for high relative degree with respect to a system. We show that for each order, the time derivative of the control barrier function can be separated into the time derivative of the nominal control barrier function and a remainder. This implies that we can use a neural network to learn the remainder so that we can approximate the dynamics of the real control barrier function. We show by simulation that our method can generate safe trajectories under parametric uncertainty using a differential drive robot model.

\end{abstract}

\section{INTRODUCTION}\label{sec:intro}
\subsection{Background and Literature Review}
In many applications, one must solve a control problem that requires not only achieving control objectives, but also providing control actions with guaranteed safety \cite{garcia2015comprehensive}. In practice, this is of great importance and it is necessary to incorporate safety criteria while designing controllers. For example, industrial robotics, medical robots as well as self-driving vehicles are all areas where safe controllers are critical. The notion of safety control was first proposed in \cite{lamport1977proving} in the form of correctness and was then formalized in \cite{alpern1985defining}, in which the authors stated that a safety property stipulates that some ``bad thing'' does not happen during execution. 

More recently, control barrier functions (CBFs) are widely used to deal with safety control \cite{ames2019control}. Barrier functions are Lyapunov-like functions which were initially used in optimization problems \cite{boyd2004convex}. CBFs are combined with control Lyapunov functions as constraints of quadratic programming (QP) problems in \cite{ames2014control} and the authors show that safety criteria can be converted into a linear constraint of the QP problem for control inputs. By solving the QP problems, we can find the next action so that safety is guaranteed during execution. It is shown in \cite{rauscher2016constrained} that finding safe control inputs by solving QP problems can be extended to an arbitrary number of constraints and any nominal control law. As a result, CBFs are widely used in safety control such as lane keeping \cite{ames2016control} and obstacle avoidance \cite{chen2017obstacle}. However, using CBFs in the QP problems means that the first order derivative of the CBFs should depend on the control input and as a result, this usually violates with many robot systems such as bipedal or car-like robots \cite{hsu2015control}. Consequently, CBFs are extended to handle position-based constraints for relative degree of two \cite{wu2015safety}. The authors in \cite{nguyen2016exponential} propose a way of designing exponential control barrier functions (ECBFs) using input-output linearization to handle CBFs with higher relative degree. Safe control actions are calculated for quadrotors using ECBF in \cite{wang2017safe}. Furthermore, a more general form of higher-order control barrier functions (HOCBFs) is introduced in \cite{xiao2019control}. 

In practice models used to design controllers are imperfect because of disturbance or parametric uncertainty. This uncertainty may lead to unsafe or even dangerous behavior, and thus it is of great importance that we synthesize controllers to handle model uncertainty. Learning-based approaches have shown great promise in controlling systems with uncertainty \cite{khansari2014learning}. Several methods using data-driven approaches have been utilized in this area. Training data is collected to learn the real dynamics for the design of more accurate controllers. In \cite{yaghoubi2020training}, the HOCBF under external disturbance is proposed and imitation learning is used to obtain a feedback controller. Gaussian process (GP) is used to approximate the model as in \cite{chowdhary2014bayesian}. The authors in \cite{cheng2019end} also use Gaussian process to estimate the model but focusing on the safety during the training process. A reinforcement learning (RL) based method to learn the model uncertainty compensation for input-output linearization control is introduced in \cite{westenbroek2019feedback} and a RL-based framework for policy improvement is proposed in \cite{choi2020reinforcement} as well. However, both methods do not rely on a nominal controller and using nominal controllers are more flexible because they can be replaced by any other reliable controllers in practice. Our work is mostly close to \cite{taylor2020learning}, in which the dynamics of the CBF of real model is learned based on the dynamics of the CBF for the nominal model. However, the main difference between our work and \cite{taylor2020learning} is that we focus on learning CBFs with higher relative degree with respect to more complex systems. Besides, we also provide sufficient conditions on controllers via CBFs with high relative degree for set invariance.
\subsection{Contribution}
In this paper, we propose a learning framework for CBFs with high relative degree. We consider a machine learning method to reduce model uncertainty using supervised regression. Safe trajectories are generated using the learned CBFs. It is shown that the dynamics of the real CBF can be learned based on the nominal CBF. As a result, the main contribution of our work is summarized as below:
\begin{itemize}
    \item [1] We propose a learning framework for CBFs with high relative degree for safety-critical control.
    \item [2] We provide sufficient conditions on controllers via CBFs with high relative degree for set invariance.
    \item [3] We show theoretically that for each order of time derivative,  the dynamics of the real CBF can be separated into two terms: the time derivative of the nominal CBF and a remainder that is independent of the control input.
    \item [4] We use supervised regression to learn the remainder so that the dynamics of the real CBF can be accurately approximated.
    \item [5] We validate our method in simulation using a differential drive model under system uncertainty with static obstacles, multiple obstacles and moving obstacles. 
\end{itemize}

\section{Preliminary and Problem Definition} \label{sec:definition}
\subsection{Model and Uncertainty}
Throughout the paper, we consider a SISO nonlinear control affine model
\begin{equation}\label{eq:sys}
\begin{split}
    \dot{x}&=f(x)+g(x)u,\\
    y&=h(x),
\end{split}
\end{equation}
such that $f:\mathbb{R}^n\to\mathbb{R}^n$ and $g:\mathbb{R}^n\to\mathbb{R}^n$ are locally Lipschitz, $x\in\mathbb{R}^n$ is the state and $u\in\mathbb{R}$ is the control and $h:\mathbb{R}^n\to\mathbb{R}$ is a $r^{\text{th}}$-order continuously differentiable function for some integer $r$. A solution of system (\ref{eq:sys}) from an initial condition $x_0\in\mathbb{R}^n$ is denoted by $x(t,x_0)$.

We also consider parametric uncertainty for the model, and as a result, we have a nominal model that estimates the dynamics of Eq~(\ref{eq:sys}) as
\begin{equation}\label{eq:nominal}
\begin{split}
    \hat{\dot{x}}&=\hat{f}(x)+\hat{g}(x)u,\\
    \hat{y}&=\hat{h}(x),
\end{split}
\end{equation}
where $\hat{f}:\mathbb{R}^n\to\mathbb{R}^n$ and $\hat{g}:\mathbb{R}^n\to\mathbb{R}^n$ are locally Lipschitz continuous and $\hat{h}:\mathbb{R}^n\to\mathbb{R}$ is an $r^{\text{th}}$-order continuously differentiable function as well.
\subsection{Control Barrier Function}
We consider a set $\mathcal{C}$ defined as a superlevel set of a continuously differentiable function $h:\mathbb{R}^n\to\mathbb{R}$ such that 
\begin{equation}\label{eq:c}
    \begin{split}
        \mathcal{C}&=\{x\in\mathbb{R}^n : h(x)\geq 0\},\\
        \partial{\mathcal{C}}&=\{x\in \mathbb{R}^n : h(x)=0\},\\
        \text{Int}(\mathcal{C})&=\{x\in \mathbb{R}^n : h(x)>0\}.
    \end{split}
\end{equation}
We refer $\mathcal{C}$ as the safe set and safety can be framed in the context of enforcing invariance of $\mathcal{C}$. Due to the local Lipschitz assumption of $f$ and $g$, for any initial condition $x_0$, there exists a maximum interval of existence $I(x_0)=[0,\tau_{\text{max}})$ such that $x(t, x_0)$ is the unique solution to (\ref{eq:sys}) on $I(x_0)$. 
As a result, we can define a set to be forward invariant as below.

\begin{definition}
Let $h:\mathbb{R}^n\to\mathbb{R}$ be a continuously differentiable function and  $\mathcal{C}\subset\mathbb{R}^n$ be a superlevel set of $h$ as defined in Eq~(\ref{eq:c}). The set $\mathcal{C}$ is forward invariant if for every $x_0\in\mathcal{C}$, $x(t)\in\mathcal{C}$ for all $t\in I(x_0)$, where $x(t)$ is the solution to Eq~(\ref{eq:sys}) with $x(0)=x_0$. The system (\ref{eq:sys}) is safe with respect to $\mathcal{C}$ if $\mathcal{C}$ is forward invariant.
\end{definition}

We note that an extended $\mathcal{K}_\infty$ function is a function $\alpha:\mathbb{R}\to\mathbb{R}$ that is strictly increasing and $\alpha(0)=0$. Based on this, we can define the control barrier function as follows.
\begin{definition}\label{def:cbf}
Let $h:\mathbb{R}^n\to\mathbb{R}$ be a continuously differentiable function and  $\mathcal{C}\subset \mathcal{D}\subset\mathbb{R}^n$ be a superlevel set of $h$ as defined in Eq~(\ref{eq:c}). Then $h$ is a control barrier function (CBF) if there exists an extended $\mathcal{K}_{\infty}$ function $\alpha$ such that for the control system Eq~(\ref{eq:sys}),
\begin{equation*}
    \sup_{u\in\mathbb{R}}{[L_fh(x)+L_gh(x)u]}\geq-\alpha({h(x)})
\end{equation*}
for all $x\in\mathcal{D}$, where $L_fh(x)=f\cdot\frac{\partial h}{\partial x}$ and $L_gh(x)=g\cdot\frac{\partial h}{\partial x}$.
\end{definition}

We can then consider the set consisting of all control values that render $\mathcal{C}$ to be safe\cite{ames2019control}:
\begin{equation*}
    K_{\text{cbf}}=\{u(x)\in\mathbb{R}:L_fh(x)+L_gh(x)u+\alpha({h(x)})\geq 0\}.
\end{equation*}
\subsection{Safety-Critical Control}
Suppose we are given a feedback controller $u=k(x)$ for the system (\ref{eq:sys}) and we wish to control the system while guaranteeing safety. It may be the case that sometimes the feedback controller $u=k(x)$ is not safe, i.e., there exists some $x$ such that $u(x)\notin K_{\text{cbf}}=\{u(x)\in\mathbb{R}:L_fh(x)+L_gh(x)u+\alpha(h(x))\geq 0\}$. We can use the following quadratic programming to find the safe control with minimum perturbation \cite{ames2016control}:
\begin{equation*}
    \begin{split}
        u(x)&=\argmin_{u\in\mathbb{R}}\frac{1}{2}||u-k(x)||^2\quad\quad\quad\text{(CBF-QP)}\\
        &\text{s.t.}\quad L_fh(x)+L_gh(x)u+\alpha({h(x)})\geq 0.
    \end{split}
\end{equation*}

\subsection{Relative Degree and Exponential Control Barrier Function}

The relative degree of a continuous differentiable function $h$ on a set with respect to a system as in Eq~(\ref{eq:sys}) is the number of times we need to differentiate $h$ along the dynamics of the system before the control input $u$ explicitly appears. The formal definition of relative degree is as below.
\begin{definition}
Given an $r^{\text{th}}$-order continuously differentiable function $h$, a set $D$ and a system as defined in Eq~(\ref{eq:sys}), we say $h$ has a relative degree of $r$ with respect to system Eq~(\ref{eq:sys}) on $D$ if $L_gL^{r-1}_fh(x)\neq 0$ and $L_gL_fh(x)=L_gL_f^2h(x)=\dots=L_gL_f^{r-2}h(x)=0$ for $x\in D$, where 
$L^r_fh(x)=L_fL^{r-1}_fh(x)$.
\end{definition}

\begin{remark}
 In this paper, we assume that $h$ has a well-defined relative degree of $r$ with respect to system Eq~(\ref{eq:sys}) on a domain $D$ of interest, similar to \cite{xu2018constrained}, where the author assumed $D=\Real^n$. 
 \end{remark}

The $r^{\text{th}}$-order time-derivative of $h(x)$ is
\begin{equation*}
    h^r(x)=L^r_fh(x)+L_gL^{r-1}_fh(x)u
\end{equation*}
and $h^r(x)$ is dependent on the control input $u$. The system is input-output linearizable if $L_gL_f^{r-1}h(x)$ is invertible. For a given control $\mu\in\mathbb{R}$, $u$ can be chosen such that $L_f^rh(x)+L_gL_f^{r-1}h(x)u=\mu$. The control input $u$ renders the input-output dynamics of the system linear. Defining a system with state
\begin{equation*}
\eta(x):=\begin{bmatrix}
h(x)\\
\dot{h}(x)\\
\vdots\\
h^{r-1}(x)\end{bmatrix}=\begin{bmatrix}
h(x)\\
L_fh(x)\\
\vdots\\
L_f^{r-1}h(x)
\end{bmatrix},
\end{equation*}
we can then construct a state-transformed linear system
\begin{equation}\label{sys:transfer}
    \begin{split}
        \dot{\eta}(x)&=F\eta(x)+G\mu,\\
        h(x)&=C\eta(x),
    \end{split}
\end{equation}
where
\begin{equation*}
    \begin{split}
        F&=\begin{bmatrix}
            0&1&0&\dots&0\\
            0&0&1&\dots&0\\
            \vdots&\vdots&\vdots&\ddots&\vdots\\
            0&0&0&\dots&1\\
            0&0&0&\dots&0\end{bmatrix},\quad G=\begin{bmatrix}
                                                    0\\
                                                    0\\
                                                    \vdots\\
                                                    0\\
                                                    1\end{bmatrix},\\
        C&=\begin{bmatrix}1&0&0&\dots&0\end{bmatrix}.
    \end{split}
\end{equation*}
The exponential control barrier function is defined below as in \cite{nguyen2016exponential}. 

\begin{definition}\label{def:ecbf}
Given a $r^{\text{th}}$-order continuously differentiable function $h:\mathbb{R}^n\to\mathbb{R}$ and a superlevel set $\mathcal{C}$ of $h$ as defined in Eq~(\ref{eq:c}), then $h$ is an exponential control barrier function (ECBF) if there exists a row vector $K=[k_0,k_1,\dots,k_{r-1}]$ such that 
\begin{equation*}
    \sup_{u\in\mathbb{R}}{[L_f^rh(x)+L_gL_f^{r-1}h(x)u]}\geq -K\eta(x).
\end{equation*}
for any $x\in\mathcal{C}$, where $K$ is chosen such that the transformed system Eq~(\ref{sys:transfer}) is stable.
\end{definition}

\begin{remark}
It is explained in \cite{nguyen2016exponential} that the ECBF with $r=1$ is the same as the CBF as in Definition~\ref{def:cbf}. The design of the ECBF, i.e, the selection of $k_0,k_1,\dots,k_{r-1}$ in $K$ is also explained in \cite{nguyen2016exponential} using state feedback control and pole placement.
\end{remark}

As a result, given an ECBF and a nominal controller $u=k(x)$, we can consider the following quadratic programming problem to enforce the condition in Definition \ref{def:ecbf} with minimum perturbation
\begin{equation*}
    \begin{split}
        u(x)&=\argmin_{u\in\mathbb{R}}\frac{1}{2}||u-k(x)||^2\quad\quad\quad\text{(ECBF-QP)}\\
        &\text{s.t.}\quad L_f^rh(x)+L_gL_f^{r-1}h(x)u\geq -K\eta(x).
    \end{split}
\end{equation*}

\subsection{High Order Control Barrier Function and Controlled Set Invariance}

Exponential control barrier functions (ECBF) can be seen as a special case of higher order control barrier functions (HOCBF) defined in \cite{xiao2019control}. In this section, we present some sufficient conditions on using HOCBF for enforcing set invariance. We first define a series of continuously differentiable function $b_0,b_j:\mathbb{R}^n\to\mathbb{R}$ for each $j=1,2,\dots,r$ and corresponding superlevel sets $\mathcal{C}_j$ as
\begin{equation}\label{eq:hocbf}
\begin{split}
    b_0(x)&=h(x),\\
    b_j(x)&=\dot{b}_{j-1}(x)+c_j\alpha_j(b_{j-1}(x)),\\
\end{split}
\end{equation}
and
\begin{equation}\label{eq:hoset}
\begin{split}
  \mathcal{C}_j&=\{x\in\mathbb{R}^n:b_{j-1}(x)\geq 0\},\\
\end{split}
\end{equation}
where $c_j>0$ are constants and $\alpha_j(\cdot)$ are differentiable extended class $\mathcal{K}$ functions. We further assume that the interiors of the sets $\mathcal{C}_i$ are given by
$$
\text{Int}(\mathcal{C}_i) = \{x\in\mathbb{R}^n:b_{j-1}(x)> 0\}.
$$

\begin{definition}\label{def:hocbf}
A continuously differentiable function $h$ is an $r^{\text{th}}$-order control barrier function (HOCBF) for system (\ref{eq:sys}), if there exists extended differentiable class $\mathcal{K}$ functions $\alpha_j(\cdot)$ for $j=1,2,...,r$, such that for $b_j(x)$ defined in Eq~(\ref{eq:hocbf}) with any arbitrary $c_j>0$ and the corresponding superlevel sets $\mathcal{C}_j$ defined as in Eq~(\ref{eq:hoset}), the following 
\begin{equation}\label{E: HOCBF}
    \sup\limits_{u\in\Real}[L_f^rh(x)+L_gL_f^{r-1}h(x)u+\mathcal{O}(h)]\geq -c_r\alpha_r(b_{r-1}(x))
\end{equation}
holds for all $x\in\bigcap_{j=1}^{r}\mathcal{C}_j$, where $\mathcal{O}(h)$ denotes the Lie derivatives of $h$ along $f$ with degree up to $r-1$.
\end{definition}
\begin{remark}
Note that $\mathcal{C}_1$ is uniquely defined, whereas $\mathcal{C}_2, \mathcal{C}_3,...,\mathcal{C}_r$ is defined based on the choice of $c_1,c_2,...,c_{r-1}$.
\end{remark}
\begin{prop}\label{prop: set-invariance}
Consider $r^{\text{th}}$-order HOCBF $h:\Real^n\rightarrow \Real$ with the associated $\alpha_j$ and sets $\mathcal{C}_i$ for $j\in\{1,2,...,r\}$. Suppose that $h$ has relative degree $r$ with respect to system (\ref{eq:sys}) on a set $D$ containing 
$\bigcap_{j=1}^{r} \mathcal{C}_j$. Then any Lipschitz continuous controller $u(x)$ that satisfies %Eq~(\ref{E: HOCBF}) 
\begin{equation}\label{eq:controller}
    L_f^rh(x)+L_gL_f^{r-1}h(x)u(x)+\mathcal{O}(h)\geq -c_r\alpha_r(b_{r-1}(x))
\end{equation}
for all $x\in \bigcap_{j=1}^{r} \text{Int}(\mathcal{C}_j)$ renders the set $\bigcap_{j=1}^{r} \text{Int}(\mathcal{C}_j)$ forward invariant. Furthermore, given any functions $\alpha_j$, $j\in\{1,2,...,r\}$, and any compact initial set $X_0\subset \text{Int}(\mathcal{C}_1)$, there exist appropriate choices of $c_j>0$ such that $X_0\subset \bigcap_{j=1}^{r} \text{Int}(\mathcal{C}_j)$.
\end{prop}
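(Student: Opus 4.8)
The plan is to prove the two assertions separately: forward invariance by a comparison (Nagumo-type) argument, and the existence of the constants $c_j$ by an induction that fixes them one at a time. For the first part, Lipschitz continuity of $u$ makes the closed loop well-posed, so $I(x_0)=[0,\tau_{\text{max}})$ is the maximal interval of the unique closed-loop solution; moreover, because the relative-degree hypothesis forces $b_0,\dots,b_{r-1}$ and $\dot b_0,\dots,\dot b_{r-2}$ to be functions of $x$ alone, we have $\dot b_{j-1}(x)=b_j(x)-c_j\alpha_j(b_{j-1}(x))$ for $j\le r-1$ and $\dot b_{r-1}(x)=L_f^rh(x)+L_gL_f^{r-1}h(x)u+\mathcal{O}(h)$, so that condition~(\ref{eq:controller}) is exactly the statement $b_r(x)\ge 0$ on $\bigcap_{j=1}^r\text{Int}(\mathcal{C}_j)$. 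I would set $T:=\sup\{t\in I(x_0):x(s)\in\bigcap_{j=1}^r\text{Int}(\mathcal{C}_j)\ \forall s\in[0,t]\}$, which is positive since $x_0$ lies in this open set, and argue by contradiction that $T=\tau_{\text{max}}$. If $T<\tau_{\text{max}}$, then on $[0,T)$ the controller condition gives $\dot b_{r-1}(x(t))\ge -c_r\alpha_r(b_{r-1}(x(t)))$; comparing with the solution $\psi_r$ of $\dot\psi=-c_r\alpha_r(\psi)$ with $\psi_r(0)=b_{r-1}(x_0)>0$ — which stays strictly positive for all $t\ge 0$ because $0$ is an equilibrium and solutions are unique — the comparison lemma yields $b_{r-1}(x(t))\ge\psi_r(t)>0$ on $[0,T)$, hence also at $t=T$ by continuity. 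Running a downward induction on $j$: once $b_j(x(t))\ge 0$ on $[0,T]$ is known, $\dot b_{j-1}=b_j-c_j\alpha_j(b_{j-1})\ge -c_j\alpha_j(b_{j-1})$ and the same comparison argument give $b_{j-1}(x(t))>0$ on $[0,T]$, all the way down to $b_0=h$. Then $x(T)\in\bigcap_{j=1}^r\text{Int}(\mathcal{C}_j)$, which is open, so the trajectory remains inside past $T$, contradicting the maximality of $T$.

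For the existence of the constants, observe that $\mathcal{C}_1,\dots,\mathcal{C}_r$ depend only on $b_0,\dots,b_{r-1}$, that $b_0=h$ involves no $c_j$, and that $b_j=\dot b_{j-1}+c_j\alpha_j(b_{j-1})$ with $\dot b_{j-1}$ a fixed continuous ($u$-independent) function of $x$ once $c_1,\dots,c_{j-1}$ are chosen. Since $X_0$ is compact and $X_0\subset\text{Int}(\mathcal{C}_1)=\{h>0\}$, we have $\delta_0:=\min_{X_0}h>0$. I would pick $c_1,\dots,c_{r-1}$ recursively: given $\min_{X_0}b_{j-1}\ge\delta_{j-1}>0$, continuity of $\dot b_{j-1}$ on the compact set $X_0$ gives $\dot b_{j-1}\ge -M_j$ there, and since $\alpha_j$ is strictly increasing, $b_j\ge -M_j+c_j\alpha_j(\delta_{j-1})$ on $X_0$, so a large enough $c_j$ makes $\min_{X_0}b_j\ge\delta_j$ for any prescribed $\delta_j>0$. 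After $r-1$ such steps, $b_0,\dots,b_{r-1}$ are all positive on $X_0$, i.e.\ $X_0\subset\bigcap_{j=1}^r\text{Int}(\mathcal{C}_j)$; $c_r$ may be any positive constant, since it does not appear in $\mathcal{C}_1,\dots,\mathcal{C}_r$.

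The most delicate point is the bootstrapping in the first part: the differential inequalities $\dot b_{j-1}\ge -c_j\alpha_j(b_{j-1})$ hold only while the trajectory sits inside $\bigcap_j\text{Int}(\mathcal{C}_j)$, so the argument must be phrased on the half-open interval $[0,T)$ and extended to the endpoint by continuity, and one must check that each comparison solution $\psi_j$ genuinely stays strictly positive — this is where uniqueness of solutions of $\dot\psi=-c_j\alpha_j(\psi)$ through the equilibrium $0$, hence sufficient regularity of the $\alpha_j$, is used. The relative-degree hypothesis is what keeps the construction consistent, as it guarantees that $b_0,\dots,b_{r-1}$ — and therefore the sets whose invariance is claimed — do not themselves depend on the closed-loop control.
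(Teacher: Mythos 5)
Your proof is correct and follows essentially the same route as the paper: the controller condition is read as $b_r\geq 0$, positivity is propagated down the hierarchy $b_{r-1},\dots,b_0$ by scalar comparison with $\dot\psi=-c_j\alpha_j(\psi)$ (where local Lipschitz continuity of the $\alpha_j$ supplies the uniqueness/positivity, exactly the point the paper stresses via its $\mathcal{KL}$ lemma), and the constants $c_j$ are chosen recursively large enough on the compact set $X_0$. Your version merely makes explicit two steps the paper leaves terse — the continuation to the maximal interval via the supremum-time contradiction, and the uniform (min/max over $X_0$) treatment of the compact initial set, including the observation that $c_r$ plays no role in the inclusion — so it is a faithful, slightly more detailed rendition of the same argument.
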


\iffalse
\begin{remark}
Note that for rescaled $\alpha_j$, sets $\mathcal{C}_i$ for $i\in\{1,2,...,r-1\}$ is adjusted to the new $\alpha_j$. Once the $u$ satisfies \eqref{E: HOCBF} given by the new $\alpha_j$,  the set $\bigcap_{i=0}^{r-1}\mathcal{C}_i$ is forward invariant, and hence $\mathcal{C}_0$.
\end{remark}
\fi
% However, 
Before proceeding to the proof, we introduce technical tools to show how the invariance conditions is effective for first order barrier functions. We first cite a lemma from \cite{glotfelter2017nonsmooth}, which can be proved based on Lemma 4.4 in \cite{khalil2002nonlinear} and well-known comparison techniques \cite{lakshmikantham1969differential}. 

\begin{lma}\label{lma: khalil}\cite{glotfelter2017nonsmooth}
Let $z:[t_0,t_f)\rightarrow \Real$ be a continuously differentiable function satisfying the differential inequality 
\begin{equation}\label{E: inequality}
    \dot{z}(t)\geq -\alpha(z(t)),\;\forall t\in[t_0,t_f),
\end{equation}
 where $\alpha:\,\Real\rightarrow\Real$ is a locally Lipschitz extended class $\mathcal{K}$ function. Then there exists a class $\mathcal{KL}$ function $\beta:\,[0,\infty)\times [0,\infty)\rightarrow [0,\infty)$ (only depending on $\alpha$) such that 
 $$
 z(t) \ge \beta(z(t_0),t-t_0),\quad \forall t\in [t_0,t_f). 
 $$
\end{lma}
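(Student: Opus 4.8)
The plan is to reduce the differential \emph{inequality} to a scalar comparison \emph{equation} and then invoke the known characterization of that equation's solutions. First I would introduce the scalar autonomous initial value problem
\begin{equation*}
    \dot{y}(t)=-\alpha(y(t)),\qquad y(t_0)=z(t_0),
\end{equation*}
whose right-hand side is locally Lipschitz by hypothesis, so that a unique solution $y(t)$ exists on a maximal forward interval. Restricting attention to $z(t_0)\ge 0$ (the only case in which $\beta(z(t_0),\cdot)$ is even defined, since $\beta$ takes a nonnegative first argument), I would observe that $y=0$ is an equilibrium because $\alpha(0)=0$; by uniqueness, any trajectory started at $y(t_0)\ge 0$ stays in $[0,y(t_0)]$, is nonincreasing because $\alpha(y)\ge 0$ there, and is in particular bounded. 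Boundedness rules out finite-time escape, so $y(t)$ in fact exists on the whole interval $[t_0,t_f)$.

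The second step is to compare $z$ with $y$. Since $z$ is continuously differentiable with $\dot{z}(t)\ge -\alpha(z(t))$, while $y$ satisfies the same relation with equality and the same initial value, the comparison principle for differential inequalities \cite{lakshmikantham1969differential} (applied in its lower-bound form) yields
\begin{equation*}
    z(t)\ge y(t),\qquad \forall t\in[t_0,t_f).
\end{equation*}
The locally Lipschitz assumption on $\alpha$ is exactly what licenses this step, as it guarantees uniqueness for the comparison equation and hence the validity of the comparison bound.

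The third step is to identify $y(t)$ with a class $\mathcal{KL}$ function. Restricted to $[0,\infty)$, the extended class $\mathcal{K}$ function $\alpha$ is an ordinary locally Lipschitz class $\mathcal{K}$ function, and the confinement of $y$ to $[0,y(t_0)]$ keeps it inside this domain; hence Lemma~4.4 of \cite{khalil2002nonlinear} applies and produces a class $\mathcal{KL}$ function $\sigma$, depending only on $\alpha$, with $y(t)=\sigma(y(t_0),t-t_0)$ for all $t\ge t_0$. Setting $\beta:=\sigma$ and chaining with the comparison bound gives $z(t)\ge\beta(z(t_0),t-t_0)$ on $[t_0,t_f)$, where $\beta$ is manifestly independent of $z$, as required.

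I expect the main obstacle to be the bookkeeping around \emph{global} existence and the \emph{direction} of the comparison: one must verify that the comparison solution never leaves $[0,y(t_0)]$ (so that Lemma~4.4, stated on the domain where $\alpha$ is class $\mathcal{K}$, is legitimately applicable), and one must invoke the reversed, lower-bound form of the comparison principle rather than its standard upper-bound form. Both become routine once the equilibrium-plus-monotonicity observation is in place, but they are precisely the points where a careless argument would break down.
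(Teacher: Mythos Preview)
Your proposal is correct and matches the paper's approach. The paper does not supply a detailed proof but only remarks that the lemma follows from Lemma~4.4 in \cite{khalil2002nonlinear} together with standard comparison techniques \cite{lakshmikantham1969differential}, which is exactly your route: compare $z$ with the solution of $\dot y=-\alpha(y)$ via the comparison principle, and then invoke Khalil's lemma to produce the $\mathcal{KL}$ bound.
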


\begin{cor}\label{cor: cor}
Given a continuously differentiable function $h:\Real^n\rightarrow \Real$ and 
dynamics on $\Real^n$
\begin{equation}
    \dot{x}=f(x)
\end{equation}
such that $f:\Real^n\rightarrow\Real$ is locally Lipschitz. Let $\mathcal{C}=\{x:\; h(x)\geq 0\}$, and $\text{Int}(\mathcal{C}):=\{x:\; h(x)> 0\}$. If the Lie derivative of $h$ along the trajectories of $x$ satisfies 
\begin{equation}\label{E: invariant}
    \dot{h}(x)\geq -\alpha(h(x)), \;\forall x\in\mathcal{C}
\end{equation}
where $\alpha$ is a  locally  Lipschitz  extended  class $\mathcal{K}$ function, then the set $\text{Int}(\mathcal{C})$ is forward invariant.
\end{cor}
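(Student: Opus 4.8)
The plan is to push the scalar comparison estimate of Lemma~\ref{lma: khalil} forward along trajectories. Fix an arbitrary $x_0\in\text{Int}(\mathcal{C})$ and let $x(t)$ be the unique solution of $\dot x=f(x)$ with $x(0)=x_0$ on its maximal interval of existence $I(x_0)=[0,\tau_{\max})$; local Lipschitzness of $f$ makes $x(\cdot)$ of class $C^1$, and since $h\in C^1$ the scalar function $z(t):=h(x(t))$ is continuously differentiable with $\dot z(t)=L_fh(x(t))=\dot h(x(t))$. Note $z(0)=h(x_0)>0$.

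Next I would isolate the maximal sub-interval on which the trajectory remains in $\mathcal{C}$. Since $h$ is continuous the set $\mathcal{C}$ is closed, and by continuity of $z$ together with $z(0)>0$ the time $T:=\inf\{t\in I(x_0):z(t)\le 0\}$ (with the convention $\inf\emptyset:=\tau_{\max}$) is strictly positive, and $z(t)>0$, hence $x(t)\in\text{Int}(\mathcal{C})\subseteq\mathcal{C}$, for every $t\in[0,T)$. On $[0,T)$ the hypothesis \eqref{E: invariant} therefore applies pointwise along the trajectory, yielding the differential inequality $\dot z(t)\ge -\alpha(z(t))$ for all $t\in[0,T)$. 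Applying Lemma~\ref{lma: khalil} on $[0,T)$ produces a class $\mathcal{KL}$ function $\beta$ (depending only on $\alpha$) with $z(t)\ge\beta(z(0),t)$ for all $t\in[0,T)$; because $\beta(\cdot,t)$ is a class $\mathcal{K}$ function and $z(0)>0$, this gives $z(t)\ge\beta(z(0),t)>0$ on $[0,T)$.

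Finally I would rule out the possibility $T<\tau_{\max}$ by a continuity argument. Suppose $T\in I(x_0)$; then $z$ is continuous at $T$, so $z(T)=\lim_{t\to T^-}z(t)\ge\beta(z(0),T)>0$. On the other hand, by definition of $T$ as an infimum there is a sequence $t_n\to T$ with $t_n\ge T$ and $z(t_n)\le 0$, so continuity forces $z(T)\le 0$, a contradiction. Hence the set $\{t\in I(x_0):z(t)\le 0\}$ is empty, i.e.\ $h(x(t))>0$ for all $t\in I(x_0)$, so $x(t)\in\text{Int}(\mathcal{C})$ throughout $I(x_0)$. Since $x_0\in\text{Int}(\mathcal{C})$ was arbitrary, $\text{Int}(\mathcal{C})$ is forward invariant.

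The only point requiring care is the endpoint bookkeeping at $T$: one must verify that the differential inequality genuinely holds on the whole half-open interval $[0,T)$ (which is exactly why $T$ is defined through the first time $z$ hits $0$, not something weaker) and that the bound from Lemma~\ref{lma: khalil}, stated on half-open intervals, survives the one-sided limit at $T$. Everything else---the $C^1$ regularity of $z$, closedness of $\mathcal{C}$, and positivity of $\beta$ on positive first arguments---is routine. An alternative packaging that avoids naming $T$ is to observe that $\{t\in I(x_0):z(t)>0\}$ is relatively open in $I(x_0)$ by continuity of $z$ and relatively closed by the $\beta$-estimate, hence equal to the connected set $I(x_0)$.
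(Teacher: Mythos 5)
Your proof is correct and follows essentially the same route as the paper: apply the comparison result of Lemma~\ref{lma: khalil} to $z(t)=h(x(t))$ and conclude $h(x(t))>0$ on the maximal interval of existence. The only difference is that you make explicit, via the stopping time $T$ and the one-sided limit argument, the continuation step needed because the hypothesis \eqref{E: invariant} is only assumed on $\mathcal{C}$ — a detail the paper's terse proof leaves implicit — and that bookkeeping is carried out correctly.
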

\begin{proof}
If $\text{Int}(\mathcal{C})=\emptyset$, then it is invariant. Otherwise, we apply Lemma \ref{lma: khalil}, it follows that if $x(t_0)\in \text{Int}(\mathcal{C})$, then we have $h(x(t))>0$ for all $t\in[t_0,t_f)$,  where $[t_0,t_f)$ is the maximal interval of existence for $x(t)$ starting from $x(t_0)$. 
\end{proof}

\begin{remark}\label{rem:alpha}
Note that the result cannot be extended to the invariance of the set $\mathcal{C}$, despite that it is widely stated so in the literature. A simple counterexample is when $\text{Int}(\mathcal{C})=\emptyset$, we can define $h(x)=-x^2$ and therefore $\mathcal{C}=\{0\}$. Then for $\dot{x}=c\neq 0$, even though we have a satisfaction of \eqref{E: invariant} on $\mathcal{C}=\{0\}$, it is not invariant under the flow. 

Now assume $\text{Int}(\mathcal{C})\neq\emptyset$, we also need to necessarily assume the locally Lipschitz continuity of $\alpha$. As for a counter example, let $\dot{x}=-1$ and $h(x)=\frac{2\sqrt{2}}{3\sqrt{3}}x^{3/2}$ for $x\geq 0$. Then the point $0$ loses asymptotic behavior and $h(x)$ will reach $0$ within finite time for any $x_0>0$. 

\end{remark}

\textit{Proof of Proposition \ref{prop: set-invariance}}: 

By the choice of controller $u(x)$ in (\ref{eq:controller}), we have
\begin{equation}\label{eq:estimate}
    b_r(x)=\dot{b}_{r-1}(x)+c_r\alpha_r(b_{r-1}(x))\geq 0
\end{equation}
for all $x\in \bigcap_{j=1}^{r} \mathcal{C}_j$. Suppose $x_0\in  \bigcap_{j=1}^{r} \text{Int}(\mathcal{C}_j)$. Then there exists a small time $\tau>0$ such that the solution to (\ref{eq:sys}) under the controller $u(x)$ is defined on $[0,\tau]$ and $x(t)\in \bigcap_{j=1}^{r} \text{Int}(\mathcal{C}_j)$ for all $t\in [0,\tau]$. The differentiability of $\alpha_j$ implies its local Lipschitz continuity. By (\ref{eq:estimate}) and Lemma \ref{lma: khalil}, we have $b_{r-1}(x(t))>0$ for all $t\in [0,\tau]$. By the same argument, we can show that $b_{j}(x(t))>0$ for all $t\in [0,\tau]$ and all $j=0,1,\cdots,r-1$. To conclude that $x(t)\in \bigcap_{j=1}^{r} \text{Int}(\mathcal{C}_j)$ for all $t$ in the maximal interval of existence of $x(t)$, we can use the fact that the $\mathcal{KL}$ lower bound given by Lemma \ref{lma: khalil} only depends on $\alpha_j$'s.

As for any given $\alpha_j$ and $X_0\subset \text{Int}(\mathcal{C}_1)$, since $L_f^jh$, $\mathcal{O}_{j-1}(h)$, $\alpha_j$ for $j\in\{1,...,r\}$ and $L_gL_f^{r-1}hu$ are all continuous functions, we can recursively define $c_j>\max(-\frac{L_f^jh(x_0)+\mathcal{O}_{j-1}h(x_0)}{\alpha_j(b_{j-1}(x_0))}, \delta_j)$ for arbitrary $\delta_j>0$ from $j=1$ to $j=r-1$. Similarly, we choose $c_r>\max(-\frac{L_f^jh(x_0)+L_gL_f^{r-1}h(x_0)u(x_0)+\mathcal{O}_{r-1}h(x_0)}{\alpha_r(b_{r-1}(x_0))}, \delta_r)$. The above choice of $c_j$ for $j\in\{1,...,r\}$ guarantees
$b_j(x_0)=\dot{b}_{j-1}(x_0)+c_j\alpha_j(b_{j-1}(x_0))> 0$, or equivalently
$X_0\subset  \bigcap_{j=1}^{r}\text{Int}(\mathcal{C}_j) $.  \hfill\BlackBox

\iffalse
It is stated in \cite{xiao2019control} that given an HOCBF as defined in Definition~\ref{def:hocbf} with the associated sets defined in Eq~(\ref{eq:hoset}), if $x_0\in\cap_{i=0}^r\mathcal{C}_i$, then any Lipschitz continuous controller $u\in\mathbb{R}$ that satisfies Eq~(\ref{eq:hocbf}) renders the set $\cap_{i=0}^r\mathcal{C}_i$ forward invariant for system Eq~(\ref{eq:sys}).
\fi

\begin{remark}
Sufficient conditions for enforcing set invariance using ECBF or HOCBF can be found in \cite{nguyen2016exponential} and \cite{xiao2019control}, respectively (see also \cite{ames2019control}). The first part of Proposition \ref{prop: set-invariance} recaptures the results in \cite{nguyen2016exponential,xiao2019control,ames2019control}, but we spell out the importance of the local Lipschitz condition on $\alpha_j$'s and the fact that the set $\bigcap_{j=1}^{r}\mathcal{C}_j$ itself may not be controlled invariant under the well-known (zeroing) CBF condition (see Remark \ref{rem:alpha} above) even for the case $r=1$ without further assumptions. The second part of Proposition \ref{prop: set-invariance} is in case that by a given HOCBF $h$, the initial point $x_0\notin  \bigcap_{j=1}^{r}\mathcal{C}_j $. However, one can always rescale the existing $\alpha_j$ with proper choices of $c_j$ to provide invariance conditions, %and level sets $\tilde{\mathcal{C}}_i$,
such that controllers adjusted to the conditions will lead the trajectories starting from any compact initial set $X_0 \subset \text{Int}(\mathcal{C}_1)$ invariant within $\mathcal{C}_1$. 
\end{remark}

\subsection{Problem Formulation}
The objective of this paper is to control a nonlinear system (\ref{eq:sys}) with unknown parameters to reach a given target set while ensuring safety, i.e., staying inside a safe set. We assume that the nominal model (\ref{eq:nominal}) is known and there exists a nominal feedback controller such that the closed-loop system can safely reach the target set. Then the problem is formally formulated as below.
\begin{problem}
Given system as in Eq~(\ref{eq:sys}), a goal region $\mathcal{X}_{\text{goal}}\subset\mathbb{R}^n$, a safe set $\mathcal{X}_{\text{safe}}\subset\mathbb{R}^n$, a nominal controller $k(x)$, and an initial state $x_{\text{init}}$, design a feedback controller $u=\tilde{k}(x)$, where $x\in\mathcal{X}_{\text{safe}}$ and $\tilde{k}:\mathcal{X}_{\text{safe}}\to\mathbb{R}$, such that the solution of the closed-loop system satisfies that $x(T, x_{\text{init}})\in\mathcal{X}_{\text{goal}}$ for some $T>0$ and $x(t, x_{\text{init}})\in\mathcal{X}_{\text{safe}}$ for all $t\geq 0$.
\end{problem}

\section{Model Uncertainty and Learning Framework}\label{sec:learning}
In this section, we discuss how we deal with model uncertainty and learn the real model. As defined in Section~\ref{sec:definition}, we consider a real model as in Eq~(\ref{eq:sys}), where $f$ and $g$ are not known precisely in practice and a nominal model as in Eq~(\ref{eq:nominal}) that estimates the true dynamics of the system is available. Then we can rewrite the real model using parametric uncertainty as 
\begin{equation}\label{eq:real}
    \dot{x}=\hat{{f}}(x)+\hat{{g}}(x)u+b(x)+A(x)u,
\end{equation}
where $b(x)=f(x)-\hat{f}(x)$ and $A(x)=g(x)-\hat{g}(x)$.

\begin{prop}
Given a nominal model and a real model as in Eq~(\ref{eq:nominal}) and Eq~(\ref{eq:real}), respectively, and the corresponding control barrier functions $\hat{h}$ and $h$ of the same relative degree $r$ with respect to the nominal model, real model and uncertainty on a set $D$, we have
\begin{equation}\label{eq:thm}
    h^m(x)=\hat{h}^m(x)+\Delta_m(x),\quad x\in D,
\end{equation}
for $m=1,2,\dots,r-1$, where $\Delta_m(x)$ is a remainder term that is independent of the control input $u$.
\end{prop}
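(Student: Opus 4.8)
The plan is to strip off the control input order by order, using the relative-degree hypothesis to show that the first $r-1$ time derivatives of $h$ along either system are control-free, and then to subtract. (For $r=1$ the claim is vacuous, so assume $r\ge 2$.) First I would record the standard consequence of $h$ having relative degree $r$ on $D$ with respect to the real system, rewritten as (\ref{eq:real}) with $f=\hat f+b$, $g=\hat g+A$: namely $L_gh=L_gL_fh=\dots=L_gL_f^{r-2}h=0$ on $D$, i.e. $L_gL_f^{k}h\equiv 0$ for $0\le k\le r-2$. I then claim, by induction on $m$, that $h^m(x)=L_f^mh(x)$ for $m=1,\dots,r-1$; in particular $h^m$ depends on $x$ only. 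The base case is $h^1=L_fh+(L_gh)u=L_fh$. For the step, if $h^{m-1}=L_f^{m-1}h$, then differentiating along (\ref{eq:real}) gives
\begin{equation*}
h^m=\frac{\partial}{\partial x}\!\big(L_f^{m-1}h\big)\big(f+gu\big)=L_f^mh+\big(L_gL_f^{m-1}h\big)u=L_f^mh,
\end{equation*}
since $m-1\le r-2$. Running the identical argument on the nominal system (\ref{eq:nominal}) yields $\hat h^m(x)=L_{\hat f}^m\hat h(x)$ for $m=1,\dots,r-1$, again control-free.

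Given this, the proposition is immediate: set $\Delta_m(x):=h^m(x)-\hat h^m(x)$. On $D$ both $h^m$ and $\hat h^m$ are functions of $x$ alone, hence so is $\Delta_m$, and $h^m(x)=\hat h^m(x)+\Delta_m(x)$ holds identically. To connect with the learning framework (the third listed contribution), I would also make $\Delta_m$ explicit. Since $L_f=L_{\hat f}+L_b$ as a differential operator, expanding $L_f^m=(L_{\hat f}+L_b)^m$ and---in the case the \emph{same} barrier is used for both models, so that $\hat h^m=L_{\hat f}^mh$---cancelling the pure term $L_{\hat f}^mh$ leaves
\begin{equation*}
\Delta_m(x)=\sum_{\substack{(i_1,\dots,i_m)\in\{\hat f,\,b\}^m\\ (i_1,\dots,i_m)\neq(\hat f,\dots,\hat f)}}L_{i_1}L_{i_2}\cdots L_{i_m}h(x),
\end{equation*}
i.e. the sum of all $m$-fold iterated Lie derivatives of $h$ that use $b$ at least once; e.g. $\Delta_1=L_bh$ and $\Delta_2=L_{\hat f}L_bh+L_bL_{\hat f}h+L_b^2h$. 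Every summand involves only $\hat f$ and $b$ (never $\hat g$, $A$, $g$ or $u$), which re-confirms control independence, and it is exactly this remainder the neural network is asked to regress.

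The only step needing care---and the reason the relative-degree hypotheses for the nominal model, the real model, and the uncertainty are all invoked---is the vanishing of the would-be control terms $(L_gL_f^{m-1}h)u$ at each order $m\le r-1$ (together with their nominal counterparts): these are precisely what survives if the relative degree drops, and positing that all three systems share relative degree $r$ on $D$ keeps the splitting $f=\hat f+b$, $g=\hat g+A$ compatible with $h^m=L_f^mh$ being control-free. I expect this bookkeeping, plus one sentence observing that the standing smoothness assumptions make each $L_f^mh$ ($m\le r-1$) well defined along solutions, to be the main and essentially the only obstacle; no estimates or fixed-point arguments are required. I would close by noting why the statement stops at $m=r-1$: at order $r$ the control reappears through $L_gL_f^{r-1}h\neq 0$ (resp. $L_{\hat g}L_{\hat f}^{r-1}\hat h\neq 0$), so $h^r-\hat h^r$ is no longer control-free---this is the term acted on by the ECBF-QP (or more generally the HOCBF constraint) and must be treated separately.
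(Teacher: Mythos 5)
Your proof is correct, and its organization is genuinely cleaner than the paper's, though the underlying idea (induction plus relative degree killing the control terms) is the same. The paper inducts directly on the decomposition $h^m=\hat h^m+\Delta_m$: it differentiates $\hat h^k(x)+\Delta_k(x)$ along the real dynamics and argues that every control-dependent term vanishes \emph{individually}, i.e.\ $L_{\hat g}L_{\hat f}^k h=L_A L_{\hat f}^k h=0$ and also $\frac{\partial \Delta_k}{\partial x}\cdot\hat g=\frac{\partial \Delta_k}{\partial x}\cdot A=0$; the latter pair does not follow term-by-term from the three stated relative-degree hypotheses (e.g.\ for $r\ge 3$ one only gets $L_{\hat g}L_b h=-L_A L_{\hat f}h$, not that each is zero) --- what is actually guaranteed is the aggregate cancellation $L_g\bigl(L_{\hat f}^k h+\Delta_k\bigr)=L_gL_f^k h=0$. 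Your route sidesteps this entirely: you first establish the standard fact $h^m=L_f^m h$ for $m\le r-1$ using only $L_gL_f^{m-1}h=0$ from the real model's relative degree, do the same for the nominal model, and then \emph{define} $\Delta_m$ as the difference, which is control-free by construction; the expansion of $(L_{\hat f}+L_b)^m$ then gives the closed form ($\Delta_1=L_bh$, $\Delta_2=L_{\hat f}L_bh+L_bL_{\hat f}h+L_b^2h$, \dots) consistent with the paper's recursion $\Delta_{k+1}=L_bL_{\hat f}^k h+L_{\hat f}\Delta_k+L_b\Delta_k$. What the paper's formulation buys is that recursive form of $\Delta_k$, which it reuses verbatim in the subsequent $m=r$ computation to exhibit $\Delta_r$ and $\Sigma_r$; your closed form carries the same information and makes explicit that only $\hat f$ and $b$ ever enter $\Delta_m$. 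One small caveat: your explicit sum formula is stated for the case $\hat h=h$ (same barrier differentiated along the two models), which is how the paper actually uses the result, but is slightly narrower than the statement's wording with two functions $\hat h$ and $h$; your first, definition-based construction of $\Delta_m$ already covers the general wording, so nothing is lost.
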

\begin{proof}
We use mathematical induction to prove the result.

For $m=1$, we have 
\begin{equation*}
\begin{split}
\dot{h}(x)&=\frac{\partial h}{\partial x}\cdot(\hat{f}(x)+\hat{g}(x)u+b(x)+A(x)u)\\
&=L_{\hat{f}}h(x)+L_{\hat{g}}h(x)u+L_bh(x)+L_Ah(x)u.\\
\end{split}
\end{equation*}
Since $h$ is with high relative degree $r>1$ with respect to the nominal model and uncertainty, we have $L_{\hat{g}}h(x)=L_Ah(x)=0$, as a result, 
\begin{equation*}
\begin{split}
\dot{h}(x)&=L_{\hat{f}}h(x)+L_bh(x)\\
&=\hat{\dot{h}}(x)+L_bh(x)\\
&=\hat{\dot{h}}(x)+\Delta_1(x),
\end{split}
\end{equation*}
where $\Delta_1(x)=L_bh(x)$. We can see that for $m=1$, Eq~(\ref{eq:thm}) holds and $\Delta_1(x)$ is independent of $u$. Now assume that Eq~(\ref{eq:thm}) holds for $m=k$ and $\Delta_m(x)$ is independent of $u$, then for $m=k+1$, we have
\begin{equation*}
    \begin{split}
        &h^{k+1}(x)\\
        &\quad=\frac{(\partial{\hat{h}^k(x)}+\Delta_k(x))}{\partial{x}}\cdot(\hat{f}(x)+\hat{g}(x)u+b(x)+A(x)u)\\
        &\quad=L_{\hat{f}}^{k+1}h(x)+L_{\hat{g}}L_{\hat{f}}^kh(x)u+L_{b}L_{\hat{f}}^kh(x)+L_AL^k_{\hat{f}}h(x)u\\
        &\qquad\qquad+\frac{\partial{\Delta}_k(x)}{\partial{x}}\cdot(\hat{f}(x)+\hat{g}(x)u+b(x)+A(x)u)).\\
    \end{split}
\end{equation*}
Since the relative degree of $h$ with respect to the real model, nominal model and uncertainty are all $r$, for $m=k+1<r-1$, we have  $L_{\hat{g}}L_{\hat{f}}^kh(x)u=L_AL^k_{\hat{f}}h(x)u=0$, $L_{\hat{f}}^{k+1}h(x)=\hat{h}^{k+1}$ and $L_{\hat{g}}(\frac{\partial \Delta_k(x)}{\partial x})=L_{A}(\frac{\partial \Delta_k(x)}{\partial x})=0$. As a result, 
\begin{equation*}
    h^{k+1}(x)=\hat{h}^{k+1}(x)+\Delta_{k+1}(x)
\end{equation*}
such that 
\begin{equation*}
\begin{split}
\Delta_{k+1}(x)&=L_{b}L_{\hat{f}}^kh(x)\\
&+\frac{\partial{\Delta}_k(x)}{\partial{x}}\cdot(\hat{f}(x)+\hat{g}(x)u+b(x)+A(x)u)\\
&=L_{b}L_{\hat{f}}^kh(x)+\frac{\partial{\Delta}_k(x)}{\partial{x}}\cdot(\hat{f}(x)+b(x))\\
&=L_{b}L_{\hat{f}}^kh(x)+L_{\hat{f}}(\frac{\partial{\Delta}_k(x)}{\partial{x}})+L_b(\frac{\partial{\Delta}_k(x)}{\partial{x}}).
\end{split}
\end{equation*}
This means that for $m=k+1$, the equation $h^{k+1}(x)=\hat{h}^{k+1}(x)+\Delta_{k+1}(x)$ also holds and $\Delta_{k+1}(x)$ is independent of control input $u$. 
\end{proof}

The above proposition shows that for $m=1,2,\cdots, r-1$, we can always separate the time derivative of the CBF for the real system into the time derivative of the CBF for the nominal system and a remainder. As a result, for $m=r$:
\begin{equation*}
\begin{split}
h^{r}(x)&=\frac{\partial (L^{r-1}_{\hat{f}}h(x)+\Delta_{r-1}(x))}{\partial x}\cdot(\hat{f}(x)+\hat{g}(x)u\\
&+b(x)+A(x)u)\\
&=L^r_{\hat{f}}h(x)+L_{\hat{g}}L^{r-1}_{\hat{f}}h(x)+L_bL^{r-1}_{\hat{f}}h(x)\\
&+L_AL^{r-1}_{\hat{f}}h(x)u+\frac{\partial \Delta_{r-1}(x)}{\partial x}\cdot(\hat{f}(x)+b(x))\\
&+\frac{\partial \Delta_{r-1}}{\partial x}\cdot(\hat{g}(x)+A(x))u\\
&=\hat{h}^r+\Delta_r+\Sigma_ru,
\end{split}
\end{equation*}
where $\Delta_r(x)=\frac{\partial \Delta_{r-1}(x)}{\partial x}\cdot(\hat{f}(x)+b(x))+L_bL^{r-1}_{\hat{f}}h(x)$ and $\Sigma_r(x)=L_AL^{r-1}_{\hat{f}}h(x)+\frac{\partial \Delta_{r-1}(x)}{\partial x}\cdot(\hat{g}(x)+A(x))$.
According to the above conclusion, we know that the higher order time derivative of the real CBF $h^r$ can be separated into the higher order time derivative of the nominal CBF $\hat{h}^r$ and a remainder $\Delta_r+\Sigma_ru$. This implies that we can use neural networks to approximate $\Delta_r(x)$ and $\Sigma_r(x)$ via supervised regression. We can sample initial states and let the system evolve according to the given nominal controller. At each time step, we can store transition information into a buffer $\mathcal{B}=\{(x_i,u_i),h_i^r\}_i^N$, where $N$ is the length of the buffer. The term $h_i^r$ is calculated using numerical differentiation and this is the true value of $r^{\text{th}}$-order time derivative of CBF. Then we can construct an estimator to learn this true value using
\begin{equation*}
    \hat{\dot{E}}(x)=\hat{\dot{h}}^r+\Delta(x)+\Sigma(x)u.
\end{equation*}
Specifying a loss function $\mathcal{L}$ using minimum square error (MSE), the regression task is to find the estimator such that the loss function $\frac{1}{N}\sum_{i=1}^{N}\mathcal{L}(\hat{\dot{E}}(x),h^r(x))$ is minimized.
Meanwhile, a very important property of learning process is that the data has to be independently and identically distributed (i.i.d). Since the data generated along the trajectories violate this assumption, we use a buffer to store memory along trajectories as in \cite{mnih2013playing}. We first sample an initial point within working space and roll out according to the nominal controller. The control input executed during the transition is calculated by solving the quadratic programming problem
\begin{equation}\label{eq:QP}
    \begin{split}
        u(x)&=\argmin_{u\in\mathbb{R}}\frac{1}{2}||u-k(x)||^2\quad\quad\quad\text{(ECBF-QP)}\\
        &\text{s.t.}\quad \hat{\dot{E}}(x)\geq -K\eta(x),
    \end{split}
\end{equation}
as in \cite{ames2016control} but using $\hat{\dot{E}}(x)$ as the estimation of $h^r(x)$.
This quadratic programming problem helps to find a safe control that is nearest to the nominal control $k(x)$. The estimator is also improved along the sampling trajectories and is updated at each time step. At each time step, we sample data from buffer $\mathcal{B}$ and update neural networks such that the loss function is minimized. The algorithm of learning CBF with high relative degree is shown in Algorithm~\ref{alg:alg}. The algorithm will finally provide an estimator that is accurate enough to mimic the dynamics of the $r^{\text{th}}$-order time derivative of CBF for the real model and safe trajectories can be generated using the learned CBF. 
\begin{algorithm}[ht]
	\caption{Learning algorithm for CBFs with high relative degree}
	\label{alg:alg}
	\begin{algorithmic}[1]
	\REQUIRE A working space, a safe set, a nominal CBF $\hat{h}$, Dataset $\mathcal{B}$, nominal control policy $k(x)$, maximum step $n$ in each trajectory, initial neural network, number of trajectory sampled $\mathcal{N}$, batch size M, loss function $\mathcal{L}$.
	\STATE Initialize neural network and buffer $\mathcal{B}$
	\FOR{i in $\mathcal{N}$}
	\STATE Sample an initial point $x_0$
	\FOR{j in $1,2,\cdots,n$}
	\STATE Calculate control $u_j$ by solving QP problem in Eq~(\ref{eq:QP}) \
	\STATE Get $x_{j+1}$ from $x_j$ and $u_j$\
	\STATE $\mathcal{B}\leftarrow ((x_j,u_j), h^r_j)$
	\STATE Sample batch from $\mathcal{B}$
	\STATE Update neural network by minimizing the loss function $\mathcal{L}$
	\ENDFOR
	\ENDFOR
	\end{algorithmic}
\end{algorithm}

\section{Simulation Result}\label{sec:simulation}
In this section, we test our algorithm using a differential drive model as in \cite{lavalle2006planning}:
\begin{equation*}
\begin{split}
    \dot{x}&=r\frac{u_l+u_r}{2}\cos{\theta},\\
    \dot{y}&=r\frac{u_l+u_r}{2}\sin{\theta},\\
    \dot{\theta}&=\frac{r}{L}(u_l-u_r),\\
\end{split}
\end{equation*}
where $x$ and $y$ are the planar positions of the center of the vehicle, $\theta$ is its orientation, $r$ is the radius of the wheel, $L$ is the distance between two wheels and $u_l$, $u_r$ are angular velocity of left and right wheels, respectively. By substituting $u=\frac{u_l+u_r}{2}$ and $\omega=u_l-u_r$, we can get the following model
\begin{equation*}
\begin{split}
    \dot{x}&=ru\cos{\theta},\\
    \dot{y}&=ru\sin{\theta},\\
    \dot{\theta}&=\frac{r}{L}\omega,\\
\end{split}
\end{equation*}
where $\omega$ is the control input of the system.

\subsection{Experiment 1}
In the first experiment, we test our algorithm for single static obstacle avoidance. The working space is $[-3,3]\times[-3,3]\times[-\pi, \pi]$. The center of the obstacle is at the origin $(0,0)$ with radius $r_O=1.5$. The parametric uncertainty of the system comes from inaccurate measurement of the parameters $r$, $L$ and $u$. The safety requirement of the system is encoded as avoiding the obstacle successfully. This is expressed mathematically using a CBF
\begin{equation*}
    h(x,y,\theta)=x^2+y^2-r_O^2.
\end{equation*}
The CBF has a relative degree $r=2$ with respect to the system as there is no orientation $\theta$ in it. The corresponding ECBF is
\begin{align}
    2(ru)^2+&2\omega(y\cos{\theta}-x\sin{\theta})\frac{r^3u^2}{L}+k_1(x^2+y^2-r_O^2)\notag\\
    &+k_2(2ru)(x\cos{\theta}+y\sin{\theta})\geq 0.\label{eq:ecbf}
\end{align}
 The nominal policy is calculated using TRPO \cite{schulman2015trust} with 2 millions training steps in the working space without any obstacles. We solve quadratic programming problems using ECBF as in Eq~(\ref{eq:ecbf}). We test our algorithm under uncertainty in $r$, $L$ and $u$ separately. The parameters for the nominal model are $r=0.1$, $L=0.1$ and $u=1$ while the real system has parameters $r=0.07$, $L=0.13$ and $u=0.7$. The parameters of the first experiment is presented in TABLE~\ref{Tab:1}. In each case, we use a neural network with 2 hidden layers and 200 nodes in each layer. We sample 40 trajectories to train each network and the simulation results are shown in Figure~\ref{fig:Exp_1}. We can see that the trajectories calculated using the nominal CBFs are not safe for the real model. But after training the neural networks, the real CBFs are well learned to provide safe trajectories under parametric uncertainty.

\begin{table}[htbp]
\centering
\begin{tabular}{lcccccl}
\toprule
& r & L & u & $k_1$ & $k_2$& \\ 
\midrule
Nominal model & 0.1 & 0.1 & 1 & 1 & 6&\\ 
Uncertainty in $r$  & 0.07 & 0.1 & 1 & 1 & 6&\\
Uncertainty in $L$  & 0.1 & 0.13 & 1 & 1 & 6&\\ 
Uncertainty in $u$  & 0.1 & 0.1 & 0.7 & 1 & 6&\\ 
\bottomrule
\end{tabular}
\caption{Parameters in simulation for Experiment 1}
\label{Tab:1}
\end{table}

\begin{figure}[htbp]
	\centering
	\begin{subfigure}[h]{0.48\linewidth}
		\includegraphics[width=1\linewidth]{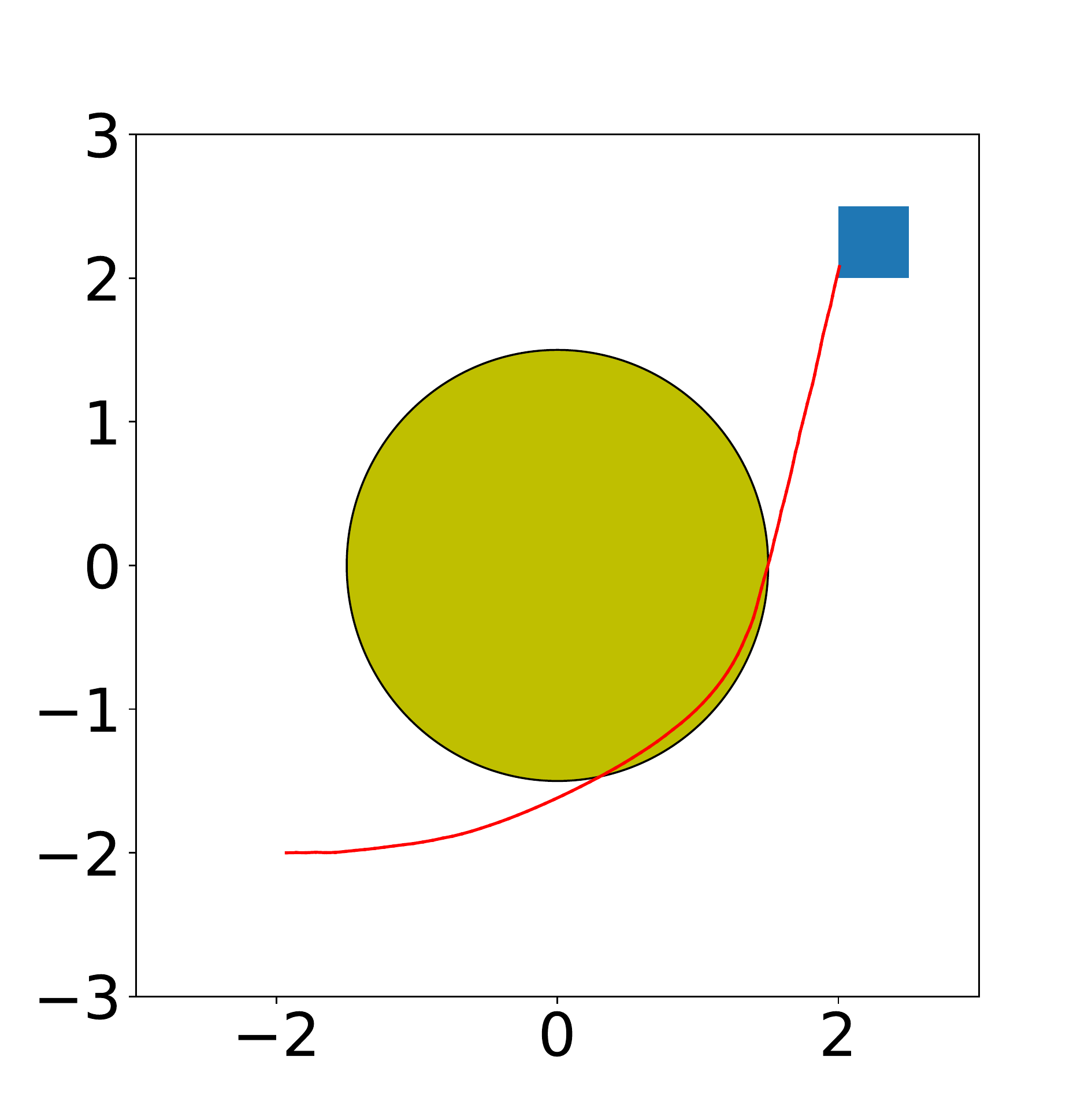}
	\caption{}
	\label{fig:2a}
	\end{subfigure}%
	\begin{subfigure}[h]{0.48\linewidth}
		\includegraphics[width=1\linewidth]{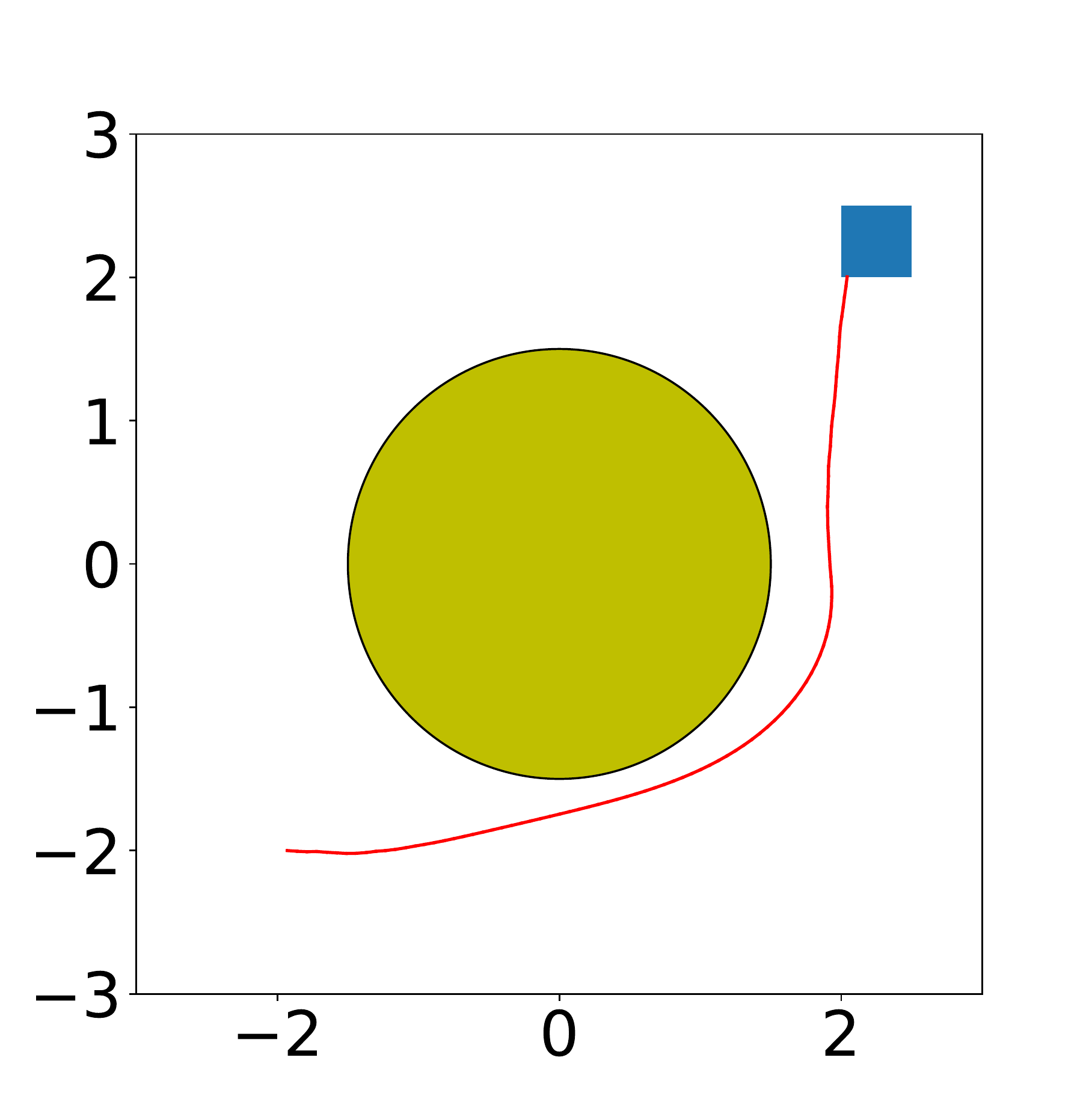}
	\caption{}
	\label{fig:2b}
	\end{subfigure}
	
	\begin{subfigure}[h]{0.48\linewidth}
	\includegraphics[width=1\linewidth]{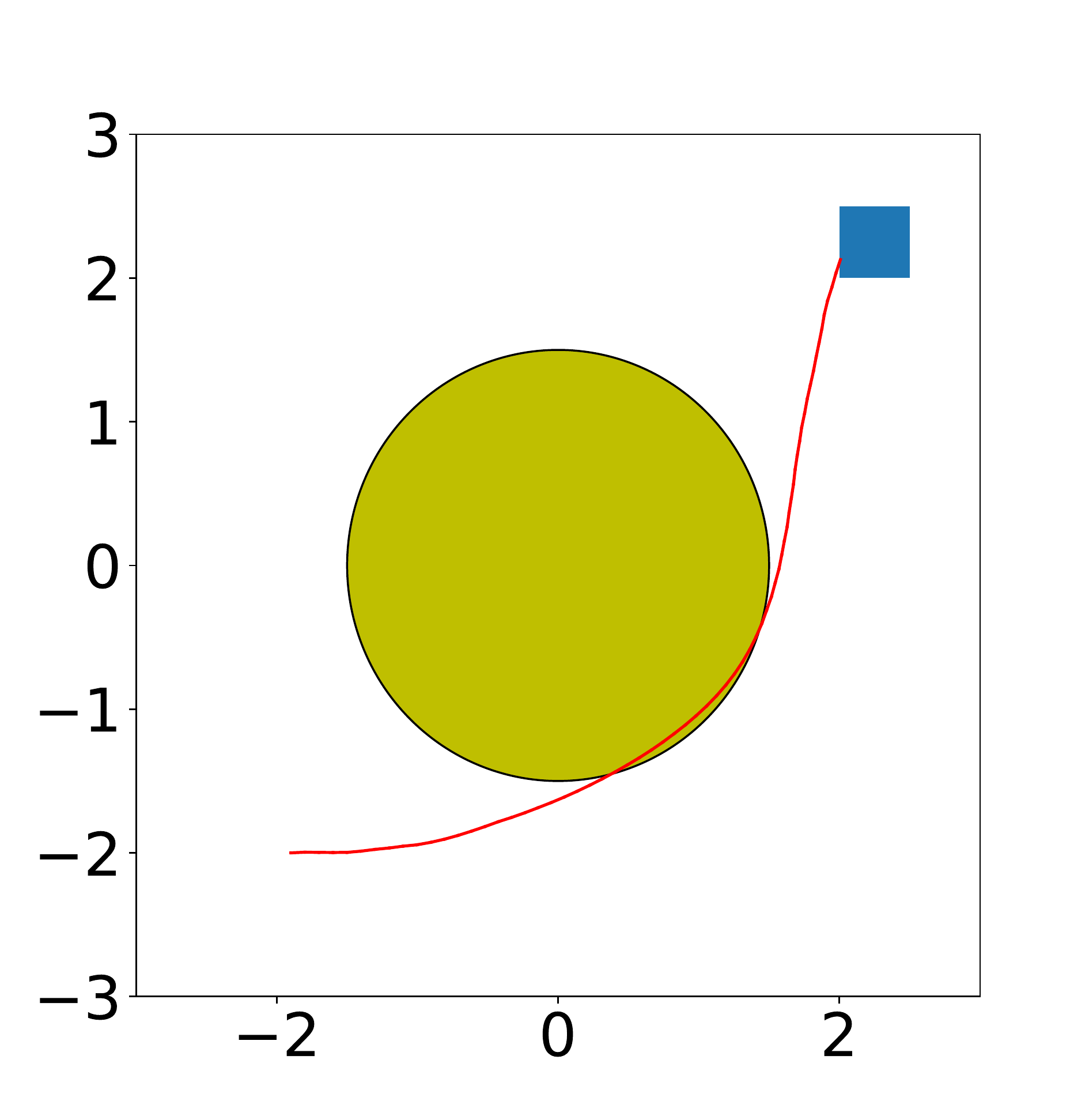}
	\caption{}
	\label{fig:3a}
	\end{subfigure}%
	\begin{subfigure}[h]{0.48\linewidth}
		\includegraphics[width=1\linewidth]{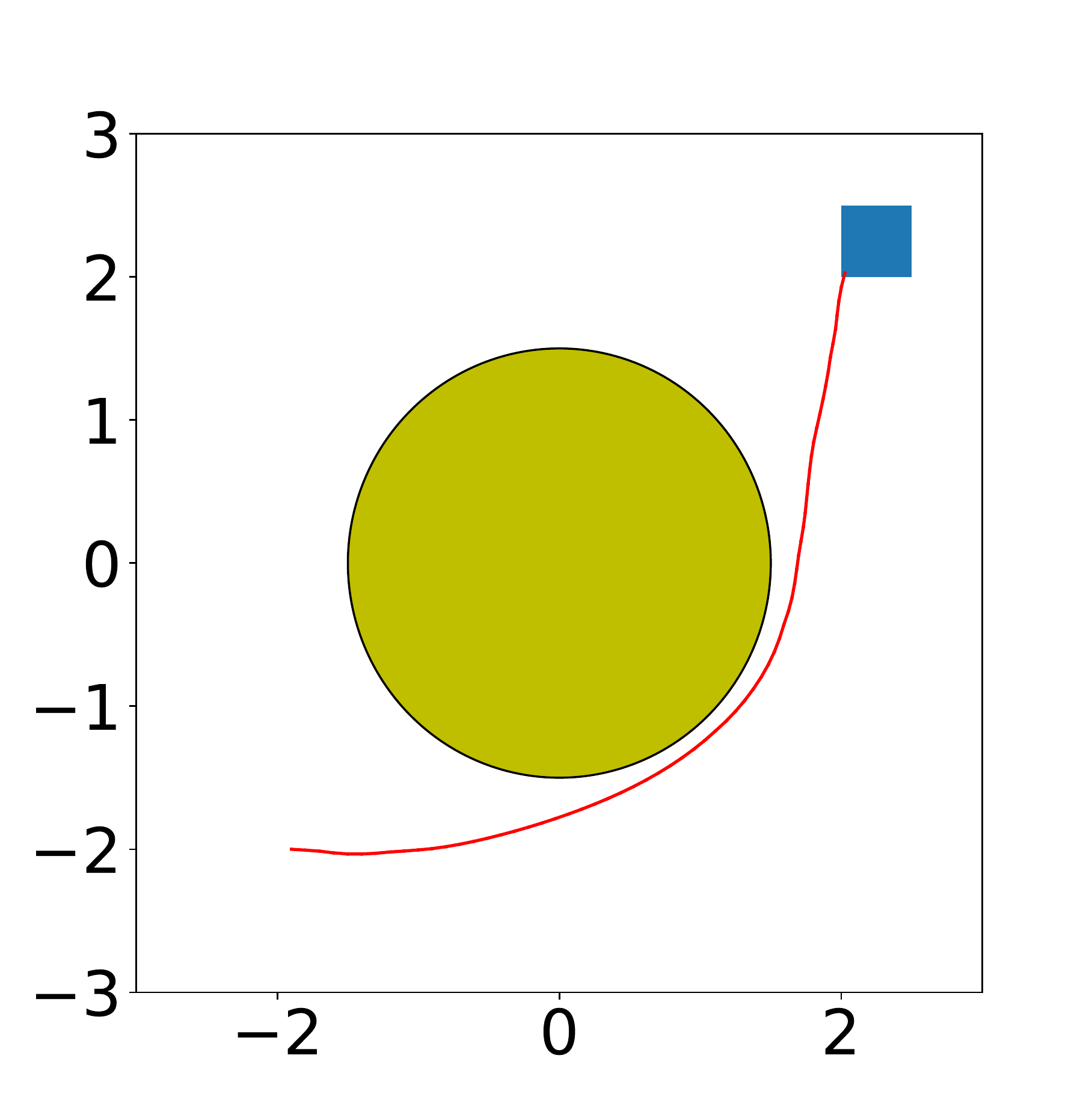}
	\caption{}
	\label{fig:3b}
	\end{subfigure}
	
	\begin{subfigure}[h]{0.48\linewidth}
	\includegraphics[width=1\linewidth]{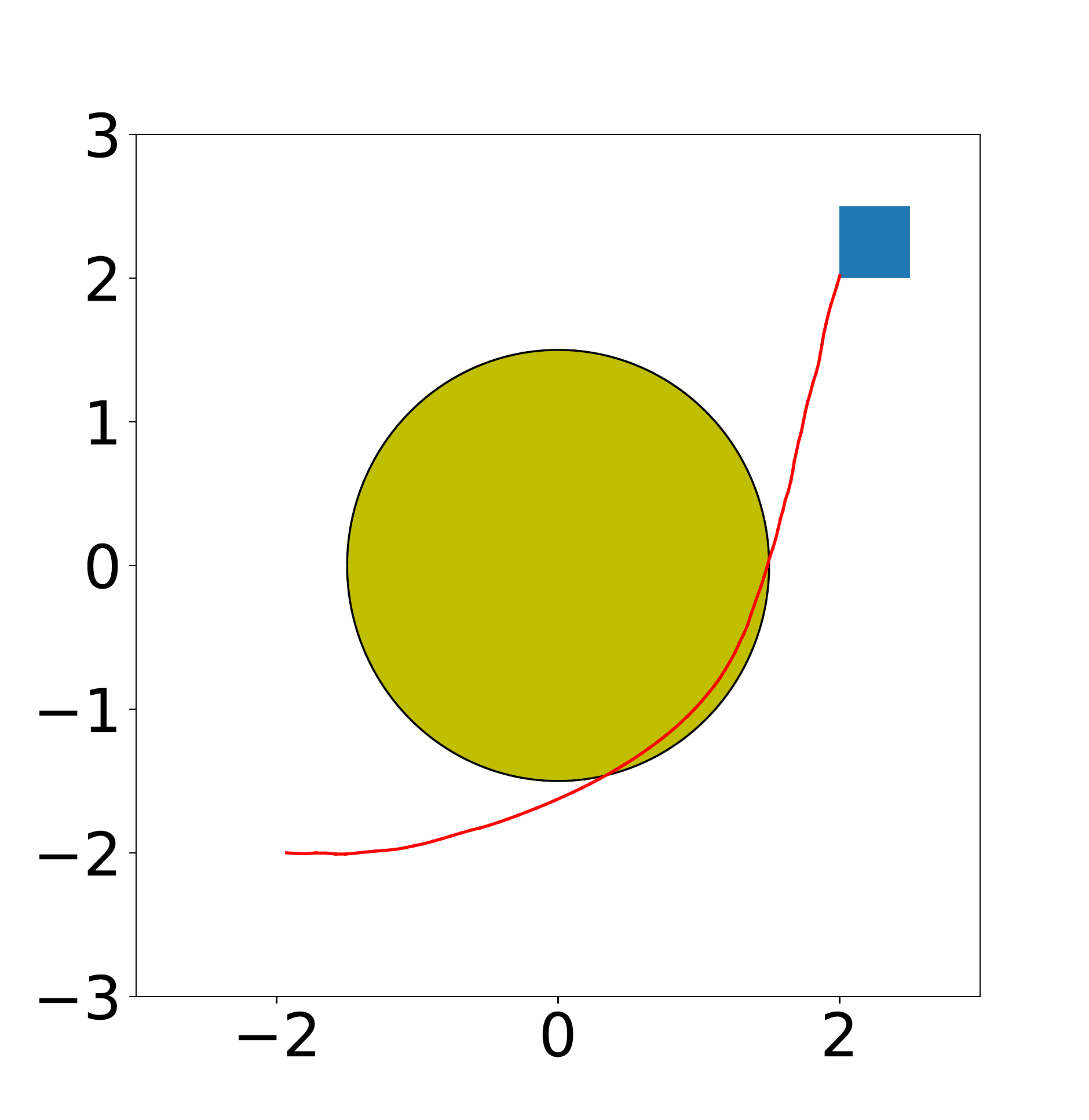}
	\caption{}
	\label{fig:4a}
	\end{subfigure}%
	\begin{subfigure}[h]{0.48\linewidth}
		\includegraphics[width=1\linewidth]{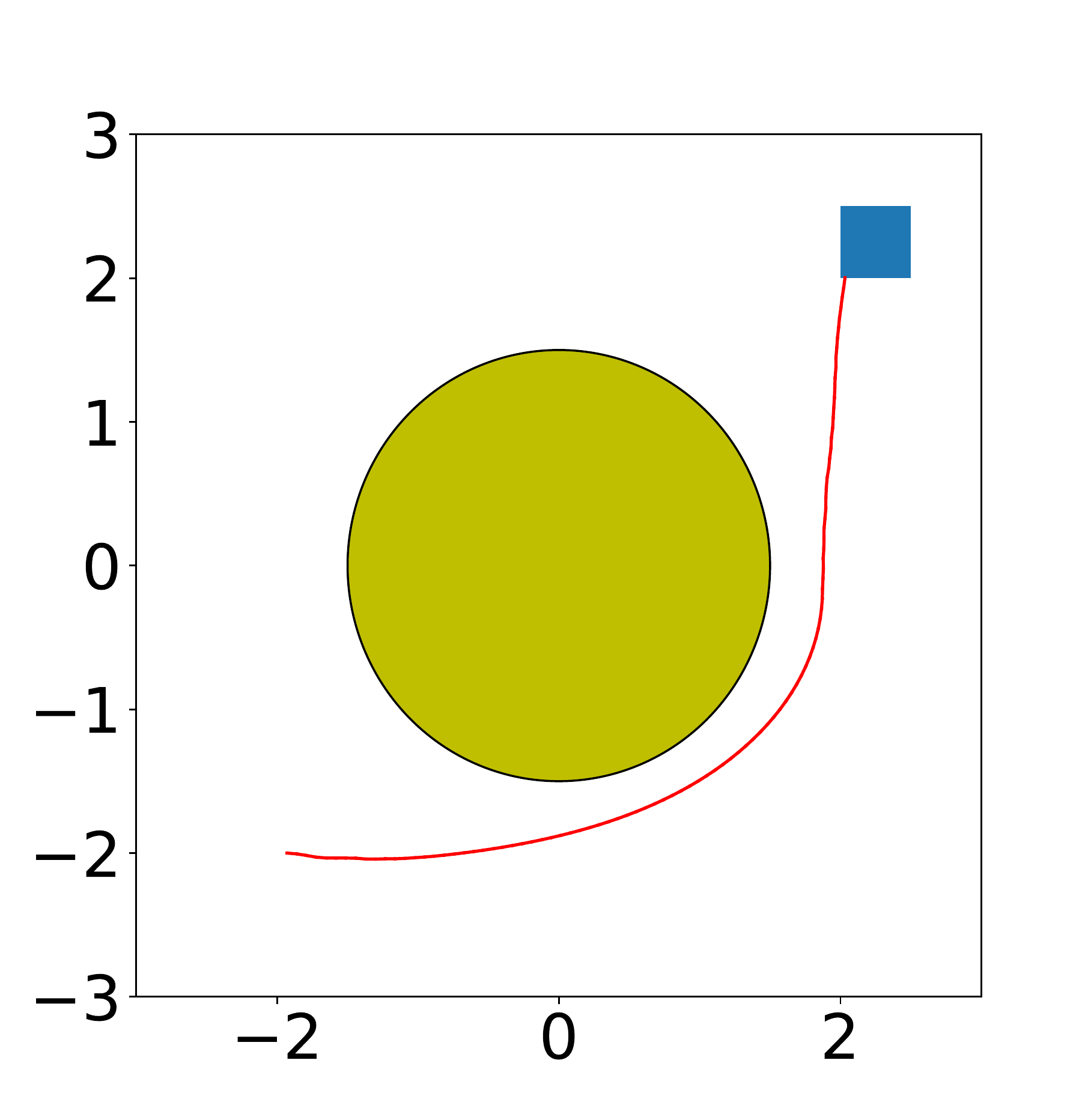}
	\caption{}
	\label{fig:4b}
	\end{subfigure}
	\caption{Simulation results for Experiment 1: The working space is $[-3,3]\times[-3,3]\times[-\pi, \pi]$. The yellow circles are the obstacles centered at $(0,0)$ with radius $r_O=1.5$. The blue squares are the goal regions and red curves are trajectories. (a), (c) and (e): The trajectories using the nominal CBF for real systems with uncertainty in $r$, $L$ and $u$, respectively. (b), (d), (f): The trajectories using the learned CBF for real systems with uncertainty in $r$, $L$ and $u$, respectively}
	\label{fig:Exp_1}
\end{figure}

We also test the safe rate for the trajectories with 50 initial points between using the nominal CBF and the learned CBF. Since the uncertainty will most likely make trajectories that are close to the obstacle unsafe, we only sample initial points from the green areas as in Fig~\ref{fig:5a} and Fig~\ref{fig:5b}. The result are shown in TABLE~\ref{Tab:2}. We can see that all the trajectories are safe by using the learned CBF while for the nominal CBF, the safe rate is only $28\%$.

\begin{figure}[htbp]
	\centering
	\begin{subfigure}[h]{0.5\linewidth}
		\includegraphics[width=1\linewidth]{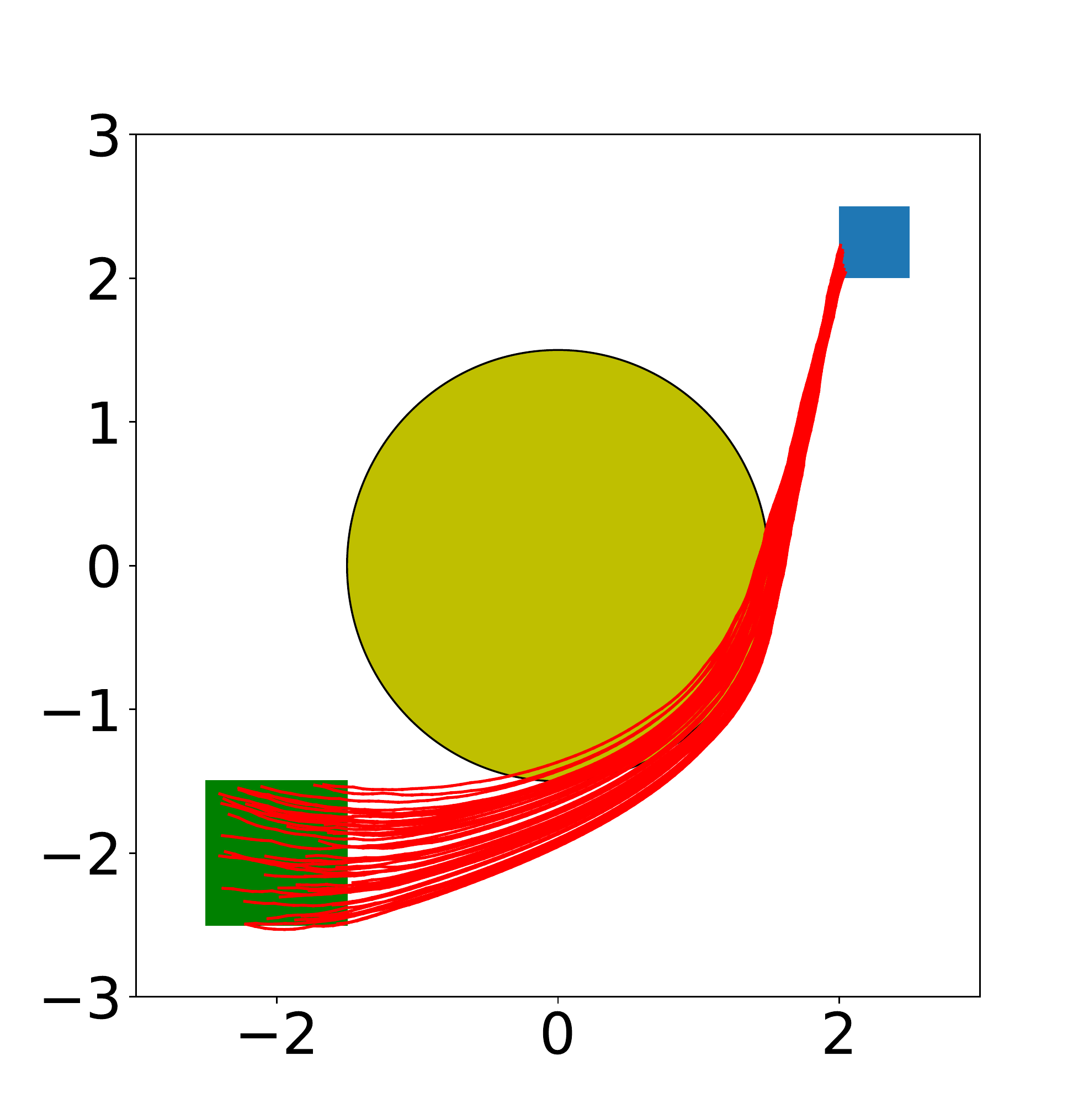}
	\caption{}
	\label{fig:5a}
	\end{subfigure}%
	\begin{subfigure}[h]{0.48\linewidth}
		\includegraphics[width=1\linewidth]{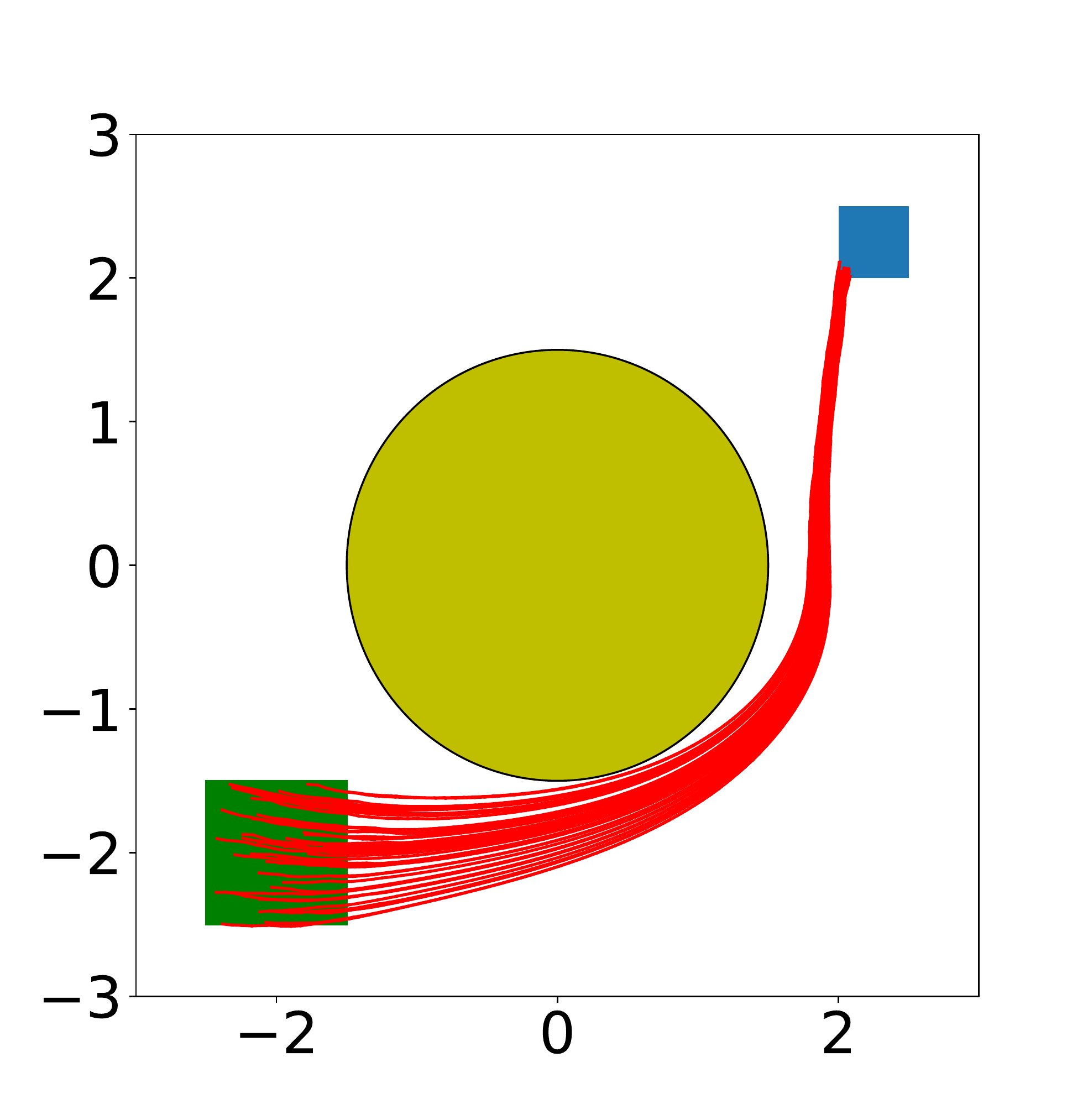}
	\caption{}
	\label{fig:5b}
	\end{subfigure}
	\caption{Safe rate comparison between using nominal CBF and the learned CBF with 50 trajectories. All the initial points are sampled from green areas (a): 50 trajectories using the nominal CBF.   (b): 50 trajectories using the learned CBF.}
\end{figure}

\begin{table}[htbp]
\centering
\begin{tabular}{lcccl}
\toprule
& Nominal CBF & Learned CBF & \\ 
\midrule
Number of samples & 50 & 50 &\\ 
Number of unsafe trajectories& 36 & 0 & \\
Safe rate & $28\%$ & $100\%$ &\\ 
\bottomrule
\end{tabular}
\caption{Safe rate between the nominal CBF and the learned CBF for Experiment 1.}
\label{Tab:2}
\end{table}

\subsection{Experiment 2}
In the second experiment, we test our algorithm in a more complicated working scenario with multiple obstacles. The robot is traveling in the working space, in which there are two static people at $(-2,1)$ and $(2,1)$ with a safe radius of $0.5$ and $1$, respectively. There is also a pedestrian on a path in the working space that the robot should not run into. For simplicity, we do not consider the dynamics of the pedestrian, but only enclose the potential positions using an ellipse. As a result, the safety criteria for the robot can be interpreted as not entering the pink regions as in Fig~\ref{fig:6a} and Fig~\ref{fig:6b}. We use one control barrier function for each pink region so that we have three CBFs $h_1=(x+2)^2+(y-1)^2-0.5^2$, $h_2=(x-1)^2+(y-1)^2-1$ and $h_3=(x+1)^2+4(y+1)^2-1$. We construct one neural network for each control barrier function. Each neural network has 2 hidden layers with 200 nodes in each layer. We test uncertainty on $u$ where the nominal model has $u=1$ and the real model $u=0.7$. The neural networks is trained using 40 sample points. The simulation result shows that the robot will run into the pink regions for the real model without learning the CBF and our algorithm can provide a safe trajectory for system with uncertainty using the trained neural networks. 

\begin{figure}[htbp]
	\centering
	\begin{subfigure}[h]{0.5\linewidth}
		\includegraphics[width=1\linewidth]{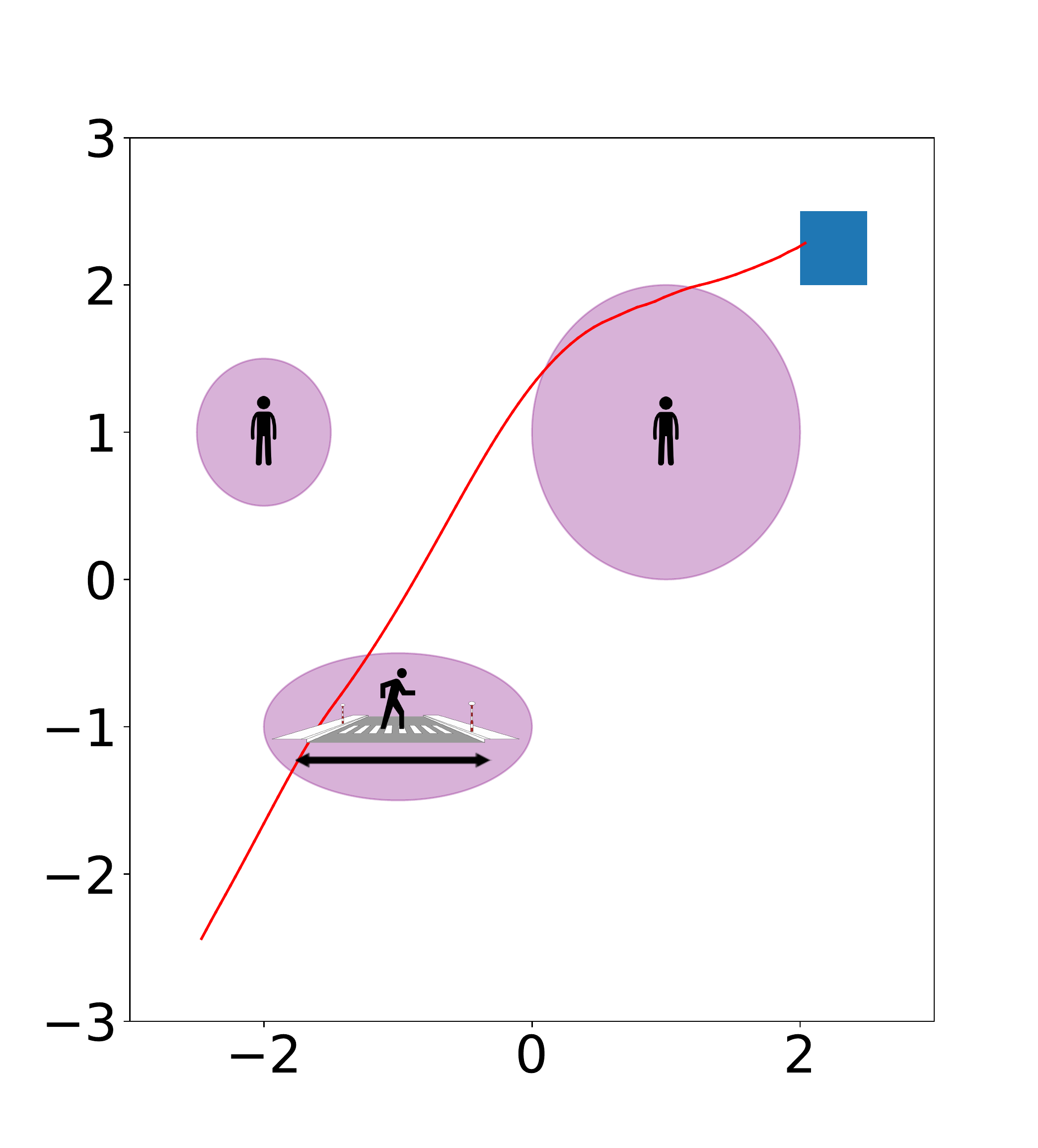}
	\caption{}
	\label{fig:6a}
	\end{subfigure}
	\begin{subfigure}[h]{0.48\linewidth}
		\includegraphics[width=1\linewidth]{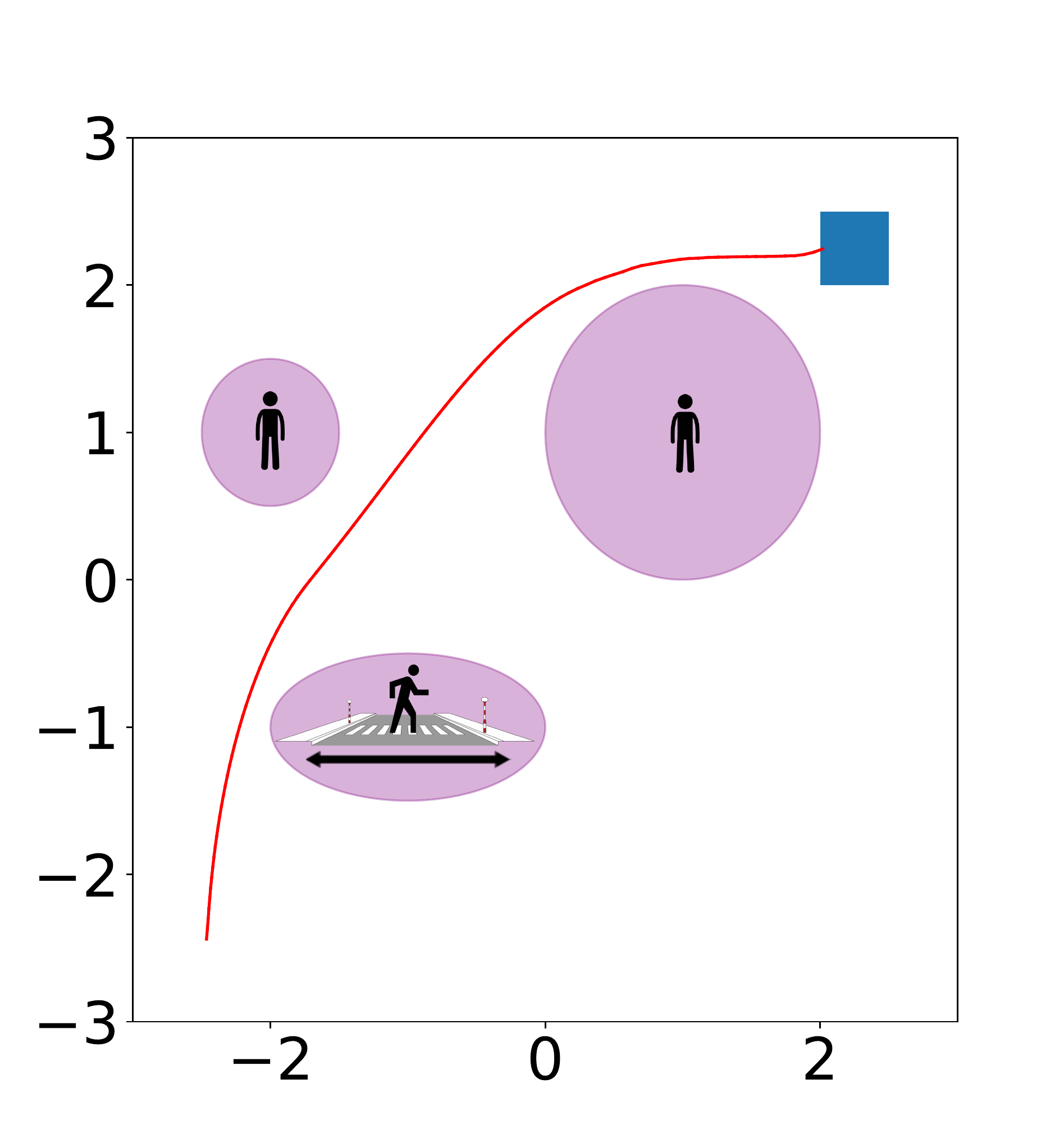}
	\caption{}
	\label{fig:6b}
	\end{subfigure}
	\caption{Simulation result of Experiment 2: the pink regions are areas that the robot should avoid. The blue square is the goal region. The initial state for the robot is (-2.5,-2.5,$\frac{\pi}{3}$). (a): The trajectory for the real model without a learned CBF. (b): The trajectory using a learned CBF.}
\end{figure}

\subsection{Experiment 3}
We test our algorithm using a dynamic obstacle in the third experiment. As is shown in Figure~\ref{fig:exp3_1}, the initial position of the robot is marked as the blue star. A moving obstacle moves along the x-axis to the right from $(-2,0)$ with a speed of $0.6/s$. The radius of the obstacle is $r_O=0.5$ and the goal is marked as the blue square. We use the control barrier function 
\begin{equation*}
    h=(x-x_O)^2+(y-y_O)^2-r_O^2,
\end{equation*}
where $x_O$ and $y_O$ are $x$ and $y$ coordinate of the obstacle. The parameters for the nominal model are the same as in the first experiment and we use the same nominal controller as well. The real system has the uncertainty that $u=0.7$. We also use the same structure of the neural network as in Experiment 1 and sample 40 trajectories for training. The trajectory using the learned CBF is shown in Figure~\ref{fig:exp3_1}. The yellow circle and star are the position of the obstacle and the robot at time step $n=50$ and the green circle and star are those for time step $n=70$. We also plot the value of $h$ during the simulation in Figure~\ref{fig:exp_h} for using the learned CBF and the nominal CBF. We see that the value of $h$ is always positive using the learned CBF while the value drops below $0$ for using the nominal CBF. This implies that the robot avoids the moving obstacle successfully when we use the learned CBF to solve the QP problems while it collides with the obstacle when we use the nominal CBF. Besides, we can see that the blue curve terminates earlier in Figure~\ref{fig:exp_h} than the red curve. This is because we terminate plotting $h$ when the robot reach the goal region. We also test the safe rate of our method. We sample 50 initial points to test the result and compare the safe rate using the nominal CBF and the learned CBF. As in Experiment 1, all the initial points are sample within in the region $[-2.5,-1.5]\times[-2.5,-1.5]$. The result is presented in TABLE~\ref{Tab:3}. From the table, we can see that our method guarantees a $100\%$ safe rate while using the nominal CBF for the real system, the success rate is only $36\%$.
\begin{figure}[htbp]
	\centering
	\begin{subfigure}[h]{0.5\linewidth}
		\includegraphics[width=1\linewidth]{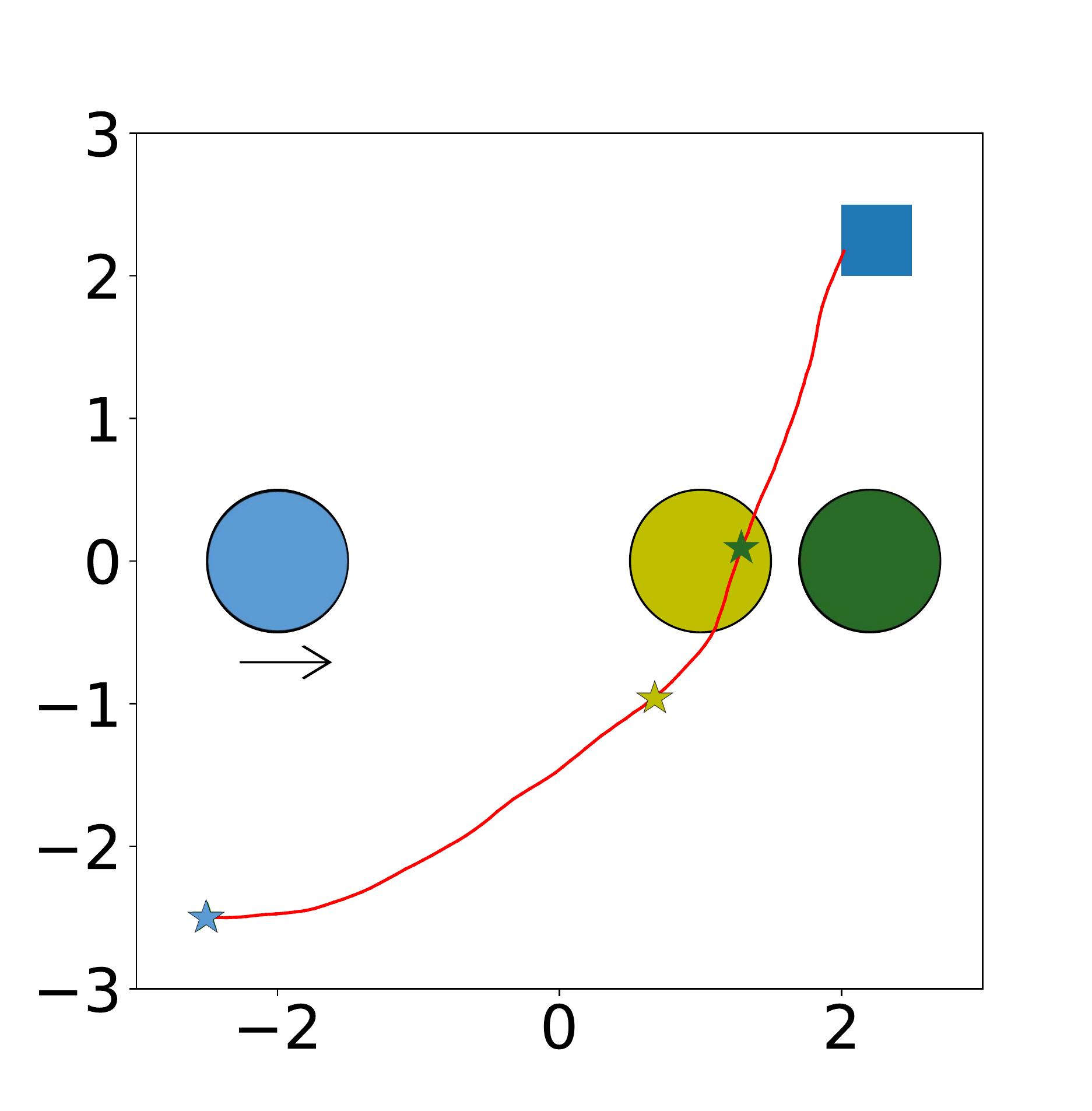}
	\caption{}
	\label{fig:exp3_1}
	\end{subfigure}%
	\begin{subfigure}[h]{0.5\linewidth}
		\includegraphics[width=1\linewidth]{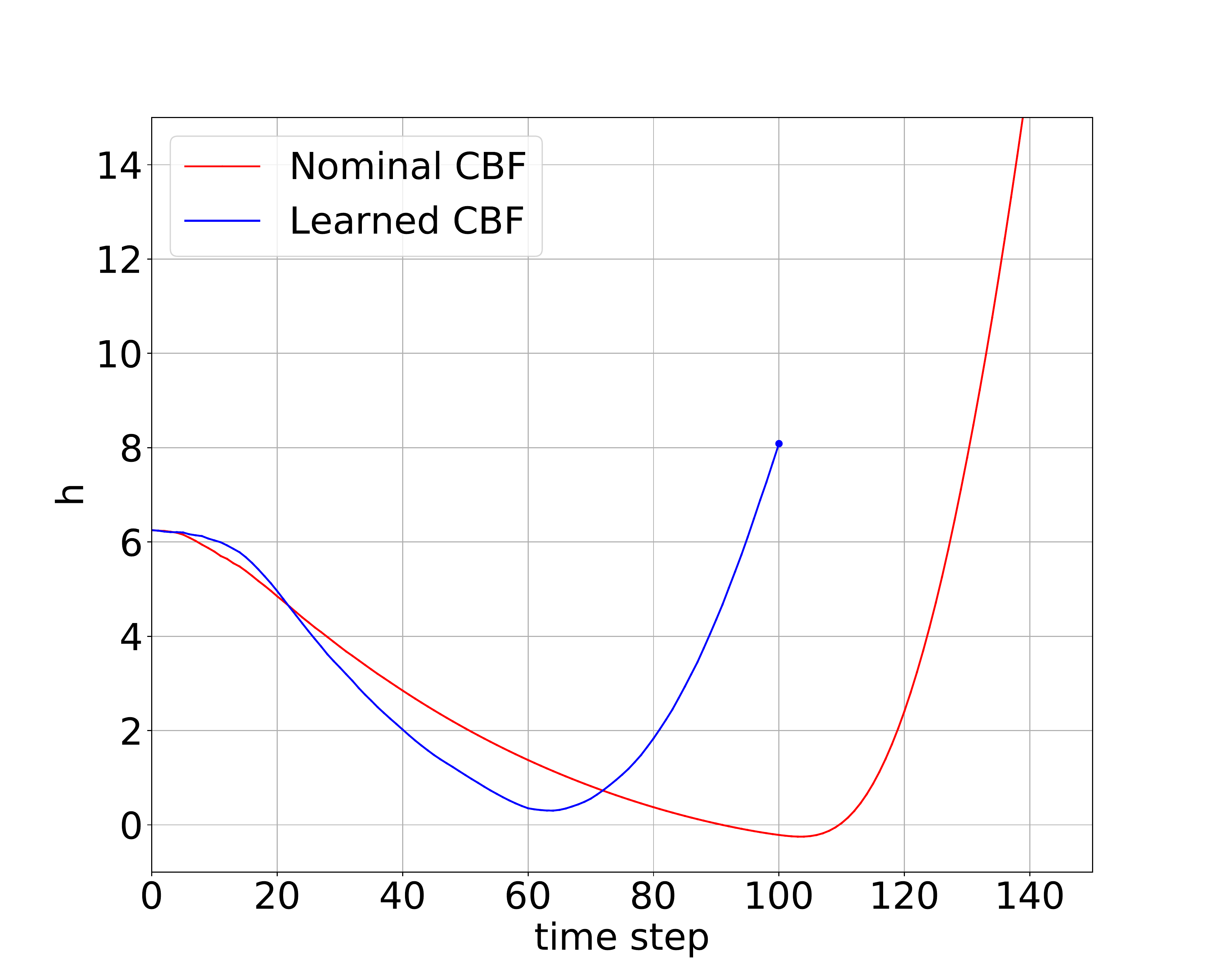}
	\caption{}
	\label{fig:exp_h}
	\end{subfigure}
	\caption{Simulation result for Experiment 3: (a): The initial position is at $(-2.5,-2.5,0)$ marked as the blue star. The obstacle is marked with the blue circle at $(-2,0)$ and moves right with a speed of $0.6/s$. The yellow star and circle are the positions of the robot and obstacle at time step $n=50$. The green star and circle are the positions of the robot and obstacle at time step $n=70$. The blue square is the goal region. The trajectory calculated using the learned CBF is the red curve. (b): The value of $h$ during simulation.}
\end{figure}

\begin{table}[htbp]
\centering
\begin{tabular}{lcccl}
\toprule
& Nominal CBF & Learned CBF & \\ 
\midrule
Number of samples & 50 & 50 &\\ 
Number of unsafe trajectories& 32 & 0 & \\
Safe rate & $36\%$ & $100\%$ &\\ 
\bottomrule
\end{tabular}
\caption{Safe rate between the nominal CBF and the learned CBF for Experiment 3.}
\label{Tab:3}
\end{table}
    
\section{Conclusion}
In this paper, we present a framework for learning the CBFs with high relative degree for systems with uncertainty. We first provide sufficient conditions on controllers via CBFs with high relative degree for set invariance. We also show that the dynamics of the real CBF can be learned from that of the nominal CBF and a remainder by using neural networks. We show in simulation that our method can handle model uncertainty using a differential driving robot model. Since we need to calculate high order derivative of the control barrier functions during training using numerical differentiation, the error in high order derivative will affect the performance of the networks. As a result, we will study the impact of numerical differentiation for the learning process. 

\bibliography{root}{}

\begin{thebibliography}{10}

\bibitem{alpern1985defining}
Bowen Alpern and Fred~B Schneider.
\newblock Defining liveness.
\newblock {\em Information Processing Letters}, 21(4):181--185, 1985.

\bibitem{ames2019control}
Aaron~D Ames, Samuel Coogan, Magnus Egerstedt, Gennaro Notomista, Koushil
  Sreenath, and Paulo Tabuada.
\newblock Control barrier functions: Theory and applications.
\newblock In {\em Proc. of ECC}, pages 3420--3431. IEEE, 2019.

\bibitem{ames2014control}
Aaron~D Ames, Jessy~W Grizzle, and Paulo Tabuada.
\newblock Control barrier function based quadratic programs with application to
  adaptive cruise control.
\newblock In {\em Proc. of CDC}, pages 6271--6278. IEEE, 2014.

\bibitem{ames2016control}
Aaron~D Ames, Xiangru Xu, Jessy~W Grizzle, and Paulo Tabuada.
\newblock Control barrier function based quadratic programs for safety critical
  systems.
\newblock {\em IEEE Transactions on Automatic Control}, 62(8):3861--3876, 2016.

\bibitem{boyd2004convex}
Stephen Boyd, Stephen~P Boyd, and Lieven Vandenberghe.
\newblock {\em Convex optimization}.
\newblock Cambridge University Press, 2004.

\bibitem{chen2017obstacle}
Yuxiao Chen, Huei Peng, and Jessy Grizzle.
\newblock Obstacle avoidance for low-speed autonomous vehicles with barrier
  function.
\newblock {\em IEEE Transactions on Control Systems Technology},
  26(1):194--206, 2017.

\bibitem{cheng2019end}
Richard Cheng, G{\'a}bor Orosz, Richard~M Murray, and Joel~W Burdick.
\newblock End-to-end safe reinforcement learning through barrier functions for
  safety-critical continuous control tasks.
\newblock In {\em Proc. of AAAI}, volume~33, pages 3387--3395, 2019.

\bibitem{choi2020reinforcement}
Jason Choi, Fernando Casta{\~n}eda, Claire~J Tomlin, and Koushil Sreenath.
\newblock Reinforcement learning for safety-critical control under model
  uncertainty, using control lyapunov functions and control barrier functions.
\newblock {\em arXiv preprint arXiv:2004.07584}, 2020.

\bibitem{chowdhary2014bayesian}
Girish Chowdhary, Hassan~A Kingravi, Jonathan~P How, and Patricio~A Vela.
\newblock Bayesian nonparametric adaptive control using gaussian processes.
\newblock {\em IEEE Transactions on Neural Networks and Learning Systems},
  26(3):537--550, 2014.

\bibitem{garcia2015comprehensive}
Javier Garc{\i}a and Fernando Fern{\'a}ndez.
\newblock A comprehensive survey on safe reinforcement learning.
\newblock {\em Journal of Machine Learning Research}, 16(1):1437--1480, 2015.

\bibitem{glotfelter2017nonsmooth}
Paul Glotfelter, Jorge Cort{\'e}s, and Magnus Egerstedt.
\newblock Nonsmooth barrier functions with applications to multi-robot systems.
\newblock {\em IEEE Control Systems Letters}, 1(2):310--315, 2017.

\bibitem{hsu2015control}
Shao-Chen Hsu, Xiangru Xu, and Aaron~D Ames.
\newblock Control barrier function based quadratic programs with application to
  bipedal robotic walking.
\newblock In {\em Proc. of ACC}, pages 4542--4548. IEEE, 2015.

\bibitem{khalil2002nonlinear}
Hassan~K Khalil and Jessy~W Grizzle.
\newblock {\em Nonlinear Systems}, volume~3.
\newblock Prentice Hall, 2002.

\bibitem{khansari2014learning}
S~Mohammad Khansari-Zadeh and Aude Billard.
\newblock Learning control lyapunov function to ensure stability of dynamical
  system-based robot reaching motions.
\newblock {\em Robotics and Autonomous Systems}, 62(6):752--765, 2014.

\bibitem{lakshmikantham1969differential}
Vangipuram Lakshmikantham and Srinivasa Leela.
\newblock {\em Differential and Integral Inequalities: Theory and Applications:
  Volume I: Ordinary Differential Equations}.
\newblock Academic Press, 1969.

\bibitem{lamport1977proving}
Leslie Lamport.
\newblock Proving the correctness of multiprocess programs.
\newblock {\em IEEE Transactions on Software Engineering}, (2):125--143, 1977.

\bibitem{lavalle2006planning}
Steven~M LaValle.
\newblock {\em Planning Algorithms}.
\newblock Cambridge University Press, 2006.

\bibitem{mnih2013playing}
Volodymyr Mnih, Koray Kavukcuoglu, David Silver, Alex Graves, Ioannis
  Antonoglou, Daan Wierstra, and Martin Riedmiller.
\newblock Playing atari with deep reinforcement learning.
\newblock {\em arXiv preprint arXiv:1312.5602}, 2013.

\bibitem{nguyen2016exponential}
Quan Nguyen and Koushil Sreenath.
\newblock Exponential control barrier functions for enforcing high
  relative-degree safety-critical constraints.
\newblock In {\em Proc. of ACC}, pages 322--328. IEEE, 2016.

\bibitem{rauscher2016constrained}
Manuel Rauscher, Melanie Kimmel, and Sandra Hirche.
\newblock Constrained robot control using control barrier functions.
\newblock In {\em Proc. of IROS}, pages 279--285. IEEE, 2016.

\bibitem{schulman2015trust}
John Schulman, Sergey Levine, Pieter Abbeel, Michael Jordan, and Philipp
  Moritz.
\newblock Trust region policy optimization.
\newblock In {\em Proc. of ICML}, pages 1889--1897, 2015.

\bibitem{taylor2020learning}
Andrew Taylor, Andrew Singletary, Yisong Yue, and Aaron Ames.
\newblock Learning for safety-critical control with control barrier functions.
\newblock In {\em Learning for Dynamics and Control}, pages 708--717. PMLR,
  2020.

\bibitem{wang2017safe}
Li~Wang, Aaron~D Ames, and Magnus Egerstedt.
\newblock Safe certificate-based maneuvers for teams of quadrotors using
  differential flatness.
\newblock In {\em Proc. of ICRA}, pages 3293--3298. IEEE, 2017.

\bibitem{westenbroek2019feedback}
Tyler Westenbroek, David Fridovich-Keil, Eric Mazumdar, Shreyas Arora, Valmik
  Prabhu, S~Shankar Sastry, and Claire~J Tomlin.
\newblock Feedback linearization for unknown systems via reinforcement
  learning.
\newblock {\em arXiv preprint arXiv:1910.13272}, 2019.

\bibitem{wu2015safety}
Guofan Wu and Koushil Sreenath.
\newblock Safety-critical and constrained geometric control synthesis using
  control lyapunov and control barrier functions for systems evolving on
  manifolds.
\newblock In {\em Proc. of ACC}, pages 2038--2044. IEEE, 2015.

\bibitem{xiao2019control}
Wei Xiao and Calin Belta.
\newblock Control barrier functions for systems with high relative degree.
\newblock In {\em Proc. of CDC}, pages 474--479. IEEE, 2019.

\bibitem{xu2018constrained}
Xiangru Xu.
\newblock Constrained control of input--output linearizable systems using
  control sharing barrier functions.
\newblock {\em Automatica}, 87:195--201, 2018.

\bibitem{yaghoubi2020training}
Shakiba Yaghoubi, Georgios Fainekos, and Sriram Sankaranarayanan.
\newblock Training neural network controllers using control barrier functions
  in the presence of disturbances.
\newblock {\em arXiv preprint arXiv:2001.08088}, 2020.

\end{thebibliography}
\bibliographystyle{plain}
\end{document}